\documentclass[a4paper, 11pt]{amsart}
\usepackage{longtable, stackrel, multicol, fancybox, tcolorbox}
\usepackage{bm,amsfonts,amsmath,amssymb,mathrsfs,bbm,amsthm}
\usepackage{graphicx, color,hyperref,eurosym, eucal,palatino, marginnote, stackrel}

\definecolor{darkblue}{rgb}{0.2,0.2,0.6}
\definecolor{darkblue2}{rgb}{0.2,0.2,0.9}
\definecolor{superdarkblue}{rgb}{0.2,0.2,0.3}
\hypersetup{colorlinks=true, linkcolor=darkblue,citecolor=darkblue,
urlcolor = superdarkblue}
\definecolor{citegreen}{rgb}{0.2,0.2,0.6}

\usepackage{geometry}

\geometry{
	papersize = {8.5in,11in},
	lmargin=1.25 in,
	rmargin = 1.25 in,
	top=1.25 in,
}

\newcommand\nb{\nabla}
\newcommand{\beq}{\begin{equation} \begin{split}}
\newcommand{\eeq}{\end{split} \end{equation}}
\newcommand\Sg{\Sigma}
\newcommand\Omg{\Omega}

\makeatletter
\def\section{\@startsection{section}{1}\z@{.9\linespacing\@plus\linespacing}%
	{.7\linespacing} {\fontsize{13}{14}\selectfont\bfseries\centering}}
\def\paragraph{\@startsection{paragraph}{4}%
	\z@{0.3em}{-.5em}%
	{$\bullet$ \ \normalfont\itshape}}

\@addtoreset{equation}{section}
\makeatother

\renewcommand\and{\qquad\text{and}\qquad}

\newcommand\sm{\setminus}
\newcommand\Op{\sfH_{\mu}}

\newcommand\dl{\delta}
\newcommand\one{\mathbbm{1}}

\newcommand{\comm}[1]{}

\def\sfH{\mathsf{H}}

\def\bm1{\mathbbm{1}}

\def\s{\sigma}

\def\p{\partial}

\def\one{\mathbbm{1}}

\def\arr{\rightarrow}

\def\tt{\theta}
\def\aa{\alpha}
\def\lm{\lambda}


\def\s{\sigma}

\def\ii{{\mathsf{i}}}
\def\p{\partial}

\def\kp{\kappa}

\def\sfH{\mathsf{H}}

\def\one{\mathbbm{1}}

\def\dd{{\mathsf{d}}}

\newcounter{counter_a}

\usepackage[latin1]{inputenc}
\usepackage[T1]{fontenc}

\newcommand{\eg}{{\it e.g.}\,}

\newcommand{\cf}{{\it cf.}\,}

\numberwithin{figure}{section}
\numberwithin{equation}{section}
\theoremstyle{plain}
\newtheorem*{thm*}{Theorem}
\newtheorem{thm}{Theorem}[section]
\newtheorem{hyp}{Hypothesis}[section]
\newtheorem{lem}[thm]{Lemma}
\newtheorem{prop}[thm]{Proposition}

\newtheorem{dfn}[thm]{Definition}
\theoremstyle{remark}
\newtheorem{remark}[thm]{Remark}
\theoremstyle{plain}


%


\newcommand{\beu}{\begin{equation*}}
\newcommand{\eeu}{\end{equation*}}
\newcommand{\besu}{\begin{equation*}
\begin{aligned}}
\newcommand{\eesu}{\end{aligned}
\end{equation*}}
\newcommand{\bes}{\begin{equation}
\begin{aligned}}
\newcommand{\ees}{\end{aligned}
\end{equation}}

\newcommand\cA{\mathcal A}

\newcommand\cD{\mathcal D}

\newcommand\cH{\mathcal H}

\newcommand\cS{\mathcal S}

\newcommand\frh{\mathfrak h}

\newcommand\eps{\varepsilon}

\newcommand\ov{\overline}

\newcommand\void[1]{}

\def\ov{\overline}
\def\eps{\varepsilon}
\def\ran{{\rm ran\,}}


      \def\dC{{\mathbb C}}

   \def\dN{{\mathbb N}}   
      \def\dR{{\mathbb R}}

   \def\dZ{{\mathbb Z}}

   \def\sfH{{\mathsf H}}

      \def\sfR{{\mathsf R}}

\def\cA{{\mathcal A}}      \def\cC{{\mathcal C}}
\def\cD{{\mathcal D}}      
   \def\cH{{\mathcal H}}   \def\cI{{\mathcal I}}

\def\cS{{\mathcal S}}      \def\cU{{\mathcal U}}

\newcommand{\dom}{\mathrm{dom}\,}

\makeatother

\begin{document}

\title[The lowest eigenvalue of a soft quantum ring]{
\textsc{Optimization of the lowest eigenvalue of a soft quantum ring}}

\author{Pavel Exner}
\address{Department of Theoretical Physics, Nuclear Physics Institute, Czech Academy of Sciences, 25068 \v Re\v z near Prague, Czechia, and Doppler Institute for Mathematical Physics and Applied Mathematics, Czech Technical University, B\v rehov\'a 7, 11519 Prague, Czechia}
\email{exner@ujf.cas.cz}

\author{Vladimir Lotoreichik}
\address{Department of Theoretical Physics, Nuclear Physics Institute, 	Czech Academy of Sciences, 25068 \v Re\v z, Czechia}
\email{lotoreichik@ujf.cas.cz}

\begin{abstract}
	We consider the self-adjoint two-dimensional Schr\"odinger operator $\Op$ associated with the differential expression $-\Delta -\mu$ describing a particle exposed to an attractive interaction given by a measure $\mu$ supported in a closed curvilinear strip and having fixed transversal one-dimensional profile measure $\mu_\bot$. This operator has nonempty negative discrete spectrum and we obtain two optimization results for its lowest eigenvalue. For the first one, we fix $\mu_\bot$ and maximize the lowest eigenvalue with respect to shape of the curvilinear strip the optimizer in the first problem turns out to be the annulus. We also generalize this result to the situation which involves an additional perturbation of $\Op$ in the form of a positive multiple of the characteristic function of the domain surrounded by the curvilinear strip.  Secondly, we fix the shape of the curvilinear strip and minimize the lowest eigenvalue with respect to variation of $\mu_\bot$, under the constraint that the total profile measure $\aa >0$ is fixed. The optimizer in this problem is $\mu_\bot$ given by the product of $\aa$ and the Dirac $\dl$-function supported at an optimal position.
\end{abstract}

\maketitle

\section{Introduction}
Spectral properties of Schr\"odinger operators describing particles localized to tubular regions attracted a lot of attention. There is more than one reason for that. One motivation comes from physics where such operators serve as models of guided quantum dynamics; the localization at that may be realized in different ways, either by Dirichlet conditions corresponding to hard walls, or by potentials, regular or singular, supported in the vicinity of a curve or another submanifold \cite{EK15}. On the other hand, there are many interesting mathematical problems here revealing intricate relations between spectra and the geometry of the interaction support. An important place among them belongs to optimization problems to which the topic of this paper pertains. In brief, our motivation here is twofold. On the one hand, we are going to show that the ground state maximization in ring-shaped structures known to be valid for Dirichlet strips \cite[Sec.~3.2.3]{EK15}, Robin strips~\cite{EL20}, and `leaky wires' \cite{EHL06} also takes place for `soft' quantum rings reminiscent of the `soft' waveguides considered recently \cite{E20, KKK20}. On the other hand, we intend to demonstrate a more general approach to the problem allowing one to treat such Schr\"odinger operators on the same footing for both regular and singular potentials. This will also make it possible to address the question about the `transverse optimization' in such problems which so far escaped attention.

To be more specific, in the present paper we deal with the spectral optimization for the two-dimen\-sional self-adjoint Schr\"odinger operator $\sfH_\mu$ in the Hilbert space $L^2(\dR^2)$ corresponding to the formal differential expression $-\Delta  - \mu$ with a measure $\mu$ on $\dR^2$, where the minus sign in front of $\mu$ means that the interaction is attractive. Recall that Schr\"odinger operators with interactions given by measures are considered in \eg~\cite{BEKS94, BFT98, KL14, V09}. Working with measure-type potentials serves the goal indicated above, namely to provide a unified description for regular potentials as well as for $\dl$-potentials supported on manifolds of codimension one. A natural way to introduce two-dimensional Schr\"odinger operators with attractive interactions given by a reasonably wide class of measures is to define such an operator by means of a closed, densely defined, symmetric and semi-bounded sesquilinear form in the Hilbert space $L^2(\dR^2)$,
\[
	H^1(\dR^2)\ni u \mapsto \int_{\dR^2}|\nabla u|^2\dd x -\int_{\dR^2}|u|^2\dd\mu.
\]

Our optimization results are formulated for measures supported in a closed generic curvilinear strip, which is defined as the set of all points whose distance from a given $C^2$-smooth closed curve $\Sg\subset\dR^2$ does not exceed a parameter $d_- >0$ for points surrounded by $\Sg$ and a parameter $d_+ > 0$ for points outside $\Sg$. The parameters $d_\pm$ are chosen small enough so that inside the strip the parallel coordinates $(s,t)$ based on the distance from $\Sg$ are globally well defined, where $s$ is the longitudinal variable and $t$ is the transversal variable; see Subsection~\ref{ssec:strips}.

Moreover, we restrict our attention to  measures $\mu$ of a special structure given in parallel coordinates by
\begin{equation}\label{eq:mu_intro}
	\dd\mu = (1+\kp(s)t)\dd s\dd\mu_\bot(t),
\end{equation}
where $\kp$ is the curvature of $\Sg$ and $\mu_\bot$ is a finite measure on the interval $[-d_-,d_+]$; see Subsection~\ref{ssec:measure} for details.
If the measure $\mu_\bot$ is generated by a bounded real-valued potential, $\dd\mu_\bot = w_\bot(t)\,\dd t$, the Hamiltonian $\Op$ models a soft quantum ring in the spirit of \cite{E20, KKK20}. On the other hand, in the case of $\dd\mu_\bot = \aa\dl_0$, where $\aa >0$ is a coupling constant and $\dl_0$ is the Dirac $\dl$-function supported at the origin, the potential-measure $\mu$ reduces to the distributional potential $\aa\dl_\Sg$, where $\dl_\Sg$ is the $\dl$-function supported on $\Sg$. The literature on Schr\"odinger operators with $\dl$-interactions supported on curves is vast; see the review paper~\cite{E08}, the monograph~\cite[Chap.~10]{EK15}, the references therein, and also more recent contributions, \eg~\cite{BLL13, DEKP16, G20, LO16, MP19}.

It can be shown that the essential spectrum of $\Op$ with $\mu$ as in~\eqref{eq:mu_intro} coincides with the positive semi-axis and that there is at least one negative eigenvalue. In the present paper, we establish two spectral shape optimization results for the lowest negative eigenvalue of the operator $\Op$.

The first result, given in Theorem~\ref{thm1}, states that the lowest eigenvalue is maximized by the potential supported in the circular strip (the annulus) provided that we fix the length of the curve on which the strip is constructed and also fix the transversal profile $\mu_\bot$ of the measure $\mu$ in~\eqref{eq:mu_intro}. This is new also in the particular case of regular potentials; for $\dl$-potentials supported on curves it reduces to~\cite[Thm. 4.1]{EHL06} providing an alternative proof of the said result. Our, more general proof relies on the fact that the ground-state for the circular strip is a radial function. As test functions in the min-max principle for $\sfH_\mu$ corresponding to the generic curvilinear strip we take transplantations of smooth, compactly supported, and radially symmetric functions approximating the ground state for the circular strip. The transplantation is performed by means of the parallel coordinates on $\dR^2$, which are associated with the distance function from $\Sg$.
Under this operation the kinetic energy term $\int_{\dR^2}|\nabla u|^2\dd x$ and the $L^2$-norm of the function do not increase while the potential energy term $\int_{\dR^2}|u|^2\dd\mu$ turns out to be preserved. This proof technique is inspired by related optimization for the lowest negative eigenvalue of the Robin Laplacian~\cite{AFK17, BFNT18, FK15, KL18, KL20}.

Moreover, we provide in Theorem~\ref{thm1b} a generalization of this optimization result in which $\Op$ is amended by an additive perturbation by a positive multiple of the characteristic function of the bounded open set surrounded by the curvilinear strip. In this more general setting the lowest negative eigenvalue need not always exist if the added steplike potential is too large, but the maximizer of the lowest spectral point remains to be given by the radially symmetric configuration.

As for the second question, we fix the curvilinear strip underlying the measure $\mu$, fix the parameter $\aa := \mu_\bot([-d_-,d_+])$ and prove in Theorem~\ref{thm2} that the lowest eigenvalue of $\Op$ is minimized by the transversal potential $\mu_\bot = \aa\dl_{t_\star}$ being the $\dl$-function supported at an optimal position $t_\star\in [-d_-,d_+]$. The  optimal position need not coincide with a boundary point of the interval $[-d_-,d_+]$ and in general we can not establish its precise location, apart from the special case of an annulus; \cf Remark~\ref{rem:annulus}. This result is reminiscent of the optimization for the lowest eigenvalue of the one-dimensional Schr\"odinger operator on an interval with fixed $L^1$-norm of an attractive potential~\cite{T84}.

\subsection*{Organization of the paper}
In Section~\ref{sec:prelim} we provide a preliminary material, which is needed to formulate and discuss in Section~\ref{sec:main} the main results of the paper. Section~\ref{sec:annulus} is devoted to the explicit analysis of the interaction given by a radially symmetric measure supported in an annulus. The method of parallel coordinates is outlined in Section~\ref{sec:parallel}. Proofs of the main results are given in Section~\ref{sec:proofs}. The paper is complemented by Appendix~\ref{app:trace} in which we
prove a special version of the trace theorem
and relying on it analyze a family of auxiliary operators with $\dl$-interactions.

\section{Preliminaries}\label{sec:prelim}
The material needed in the following is split here in three subsections. First, in Subsection~\ref{ssec:strips} we specify the geometric setting. Secondly, in Subsection~\ref{ssec:measure}, we define a family of measures. Finally, in Subsection~\ref{ssec:hamiltonian} we introduce a class of Schr\"odinger operators with potentials given by measures and analyze its basic properties.

\subsection{Curvilinear strips}\label{ssec:strips}

Let $\Sg\subset\dR^2$ be a closed $C^2$-smooth curve of length \mbox{$L>0$}. We implicitly assume that $\Sg$ is the boundary of a simply-connected bounded planar domain $\Omg_+\subset\dR^2$. By $\Omg_- := \dR^2\sm\ov{\Omg_+}$ we denote the complement of $\Omg_+$. In the following, $\nu$ stands for the outer unit normal vector to the domain $\Omg_+$. We parametrize the curve $\Sg$ counterclockwise by its arc length, that is, using the map $\s\colon [0,L]\arr\dR^2$ for which the tangential vector $\tau(s) := \dot\s(s)$ is of unit length. With a slight abuse of notation we use the abbreviation $\nu(s) = \nu(\s(s))$. We denote by $\kp\colon[0,L]\arr\dR$ the curvature of $\Sg$, the sign of which is chosen so that $\kp$ is non-negative provided that $\Omg_+$ is convex. The Frenet-Serret formula connects $\tau$, $\nu$, and $\kp$ by
\begin{equation}\label{eq:FS}
	\dot\tau(s) = -\kp(s)\nu(s).
\end{equation}

For $d_\pm \ge 0$ with $d_+ +d_- > 0$, we set $\cI :=[-d_-,d_+]$ and consider the mapping
\begin{equation}\label{eq:mapping}
	[0,L)\times \cI \ni (s,t) \mapsto \s(s) + t\nu(s).
\end{equation}
According to~\cite[Thm. 5.25]{Lee} (see also~\cite[Prop. B.2]{BEHL17}) there exist numbers $D_- = D_-(\Sg) > 0$ and $D_+ = D_+(\Sg) \in (0,\infty]$ such that the mapping~\eqref{eq:mapping} is injective for all $d_+ < D_+$ and $d_- < D_-$. In what follows, we assume that $D_\pm$ are chosen to be the largest possible. We note that~\eqref{eq:FS} and the injectivity of the mapping~\eqref{eq:mapping} imply that its Jacobian $J(s,t)$  satisfies
\begin{equation}\label{eq:Jac}
	J(s,t) = 1+t\kp(s) > 0\qquad\text{for	all}\quad s\in [0,L)\quad \text{and}\quad t \in (-D_-,D_+).
\end{equation}
It is also worth to mention that $D_+(\Sg) = \infty$ provided that $\Omg_+$ is a convex domain.

We define the closed curvilinear strip $\Pi_\cI(\Sg)\subset\dR^2$ with $d_\pm\in(0,D_\pm)$ of the width $d_-+d_+ > 0$ as follows,
\begin{equation}\label{key}
	\Pi_\cI(\Sg) := \big\{\s(s) + t\nu(s)\in\dR^2\colon s\in [0,L), t\in \cI\big\}.
\end{equation}
Furthermore, we specify the domain surrounded by the strip $\Pi_\cI(\Sg)$ as
\begin{equation}\label{eq:Omg}
	\Omg := \Omg_+\sm\Pi_\cI(\Sg)
\end{equation}
and denote by $\chi$ its characteristic function. For a fixed $t\in\cI$ we introduce the curve
\begin{equation}\label{eq:Sgt}
	\Sg_t := \{\s(s)+t\nu(s)\in\dR^2\colon s\in[0,L)\}
\end{equation}
located at the distance $|t|$ from $\Sg$, being inside $\Omg_+$ for $t < 0$ and outside it for $t > 0$. The curve $\Sg_0$ can be identified with $\Sg$.

Let $\cC\subset\dR^2$ be a circle of length $L > 0$. The radius of $\cC$ is denoted by $R$ and the identity $R = \frac{L}{2\pi}$ clearly holds. The unit speed counterclockwise parametrization of $\cC$ is given by the map $\s_\circ\colon [0,L]\arr\dR^2$ and the respective outer unit normal is denoted by $\nu_\circ$. In this particular case, we clearly have $D_-(\cC) = R$, $D_+(\cC) = \infty$ and the circular strip $\Pi_\cI(\cC)$ is just an
annulus. The domain $\Omg_\circ$ surrounded by the annulus $\Pi_\cI(\cC)$ is the disk centred at the origin and having the radius $R - d_-$. We denote its characteristic function by $\chi_\circ$.

\subsection{The class of finite measures}
\label{ssec:measure}

Let the geometric setting be as in Subsection~\ref{ssec:strips}. The perturbations considered in this paper, in general singular, are associated to a class of Radon measures on $\dR^2$ which we introduce here and specify its particular cases. In the following, we denote by $\one_\cU$ the characteristic function of an open set $\cU\subset\dR^2$.

Consider first a nonzero and finite measure $\mu_\bot$ on the interval $\cI =[-d_-,d_+]$ with $d_\pm \in (0,D_\pm)$. Using it we rigorously define the measure $\mu$ in~\eqref{eq:mu_intro} supported on $\Pi_\cI(\Sg)$ as follows,
\begin{equation}\label{eq:potentials}
	\mu(\cU) := \int_0^L\int_{-d_-}^{d_+}
	\one_{\cU}\big(\s(s) +t\nu(s)\big)\,
	(1+\kp(s)t) \dd \mu_\bot(t)\dd s
\end{equation}
for an open set $\cU\subset\dR^2$. Taking the structure of the measure $\mu$ into account, the one-dimensional measure $\mu_\bot$ will be occasionally called transversal.

In the class of measures~\eqref{eq:potentials} we single out several particular cases. First of all, we introduce the measure $\mu_\circ$ which corresponds to the annulus $\Pi_\cI(\cC)\subset\dR^2$
\begin{equation}\label{eq:potentials2}
\mu_\circ(\cU) := \int_0^L\int_{-d_-}^{d_+}
\one_{\cU}\big(\s_\circ(s) +t\nu_\circ(s)\big)\,
\left(1+\frac{t}{R}\right) \dd \mu_\bot(t)\dd s,
\qquad \cU\subset\dR^2.
\end{equation}
Secondly, we consider the case when the transversal measure is given by $\dd\mu_\bot = w_\bot(t)\dd t$ with a real-valued non-negative $w_\bot \in L^\infty(\cI)$. In this case the measure $\mu$ in~\eqref{eq:potentials} can be characterized by its Radon-Nikodym derivative, $\dd\mu = V(x)\dd x$, where
\begin{equation}\label{eq:V}
V(x) = \begin{cases} 0,&\quad \text{if}\;\; x\notin \Pi_\cI(\Sg),\\
w_\bot(t), &\quad \text{if}\;\; x = \s(s) + t\nu(s)\in \Pi_\cI(\Sg).
\end{cases}		
\end{equation}
Finally, we single out the case when the transversal measure is given by $\mu_\bot = \aa\dl_t$, where $\aa$ is a positive real number and $\dl_t$ is the one-dimensional Dirac $\delta$-function supported at the point $t\in\cI$. In this case the measure $\mu$ in~\eqref{eq:potentials} can be identified with
$\aa\dl_{\Sg_t}$, where $\dl_{\Sg_t}$ is the $\dl$-function supported on the closed curve $\Sg_t$ defined in~\eqref{eq:Sgt}.

\subsection{The Hamiltonian}\label{ssec:hamiltonian}

Let the measure $\mu$ be as in Subsection~\ref{ssec:measure}. The operator we are interested in is associated to the formal differential expression $-\Delta - \mu$ on $\dR^2$. In order to introduce it properly, we use the form approach.

\begin{prop}
The quadratic form
\begin{equation}\label{eq:form}
	\frh_{\mu}[u] := \|\nabla u\|^2_{L^2(\dR^2;\dC^2)} -
	\int_{\dR^2}|u|^2\dd \mu,\qquad
	\dom\frh_{\mu} := H^1(\dR^2),
\end{equation}
is closed, densely defined, symmetric, and lower-semibounded in the Hilbert space $L^2(\dR^2)$.
\end{prop}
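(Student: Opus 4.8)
The plan is to reduce the whole statement to the KLMN theorem. Density of $\dom\frh_\mu=H^1(\dR^2)$ in $L^2(\dR^2)$ is clear, and the associated sesquilinear form $\frh_\mu[u,v]=\langle\nabla u,\nabla v\rangle-\int_{\dR^2}\ov{u}\,v\,\dd\mu$ is Hermitian because $\mu$ is a real (by the standing assumptions, nonnegative) measure. Thus the only real content is that the perturbation is infinitesimally form-bounded with respect to the free Dirichlet form $\frh_0[u]=\|\nabla u\|_{L^2(\dR^2;\dC^2)}^2$, i.e.\ that for every $\varepsilon>0$ there is $C_\varepsilon>0$ with
\begin{equation}\label{eq:pf-inf}
\int_{\dR^2}|u|^2\,\dd\mu\le\varepsilon\,\|\nabla u\|_{L^2(\dR^2;\dC^2)}^2+C_\varepsilon\,\|u\|_{L^2(\dR^2)}^2,\qquad u\in H^1(\dR^2).
\end{equation}
Granting \eqref{eq:pf-inf}, the KLMN theorem gives in one stroke that $\frh_\mu$ is well defined, closed and lower-semibounded on $H^1(\dR^2)$: indeed $\frh_\mu[u]\ge(1-\varepsilon)\|\nabla u\|^2-C_\varepsilon\|u\|^2\ge-C_\varepsilon\|u\|^2$, and for $\varepsilon<1$ the quantity $\big(\frh_\mu[u]+(C_\varepsilon+1)\|u\|^2\big)^{1/2}$ is a norm equivalent to the $H^1(\dR^2)$-norm, hence complete.

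To obtain \eqref{eq:pf-inf} I would recast the measure integral in parallel coordinates. By the definition \eqref{eq:potentials} of $\mu$ and since, by the Frenet-Serret formula \eqref{eq:FS}, the weight $1+\kp(s)t$ in \eqref{eq:Jac} is precisely the speed of the parametrization $s\mapsto\s(s)+t\nu(s)$ of the parallel curve $\Sg_t$, one has (first for $u\in C_0^\infty(\dR^2)$, then for $u\in H^1(\dR^2)$ by density once the trace bound below is known)
\begin{equation}\label{eq:pf-layer}
\int_{\dR^2}|u|^2\,\dd\mu=\int_{-d_-}^{d_+}\Big(\int_{\Sg_t}|u|^2\,\dd\ell\Big)\dd\mu_\bot(t)\le\mu_\bot(\cI)\sup_{t\in\cI}\int_{\Sg_t}|u|^2\,\dd\ell,
\end{equation}
$\dd\ell$ being arc length on $\Sg_t$. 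It therefore suffices to prove a trace inequality uniform over the family $\{\Sg_t\}_{t\in\cI}$: there is $C>0$ such that
\begin{equation}\label{eq:pf-trace}
\int_{\Sg_t}|u|^2\,\dd\ell\le C\,\|u\|_{L^2(\dR^2)}\|u\|_{H^1(\dR^2)},\qquad t\in\cI,\quad u\in H^1(\dR^2),
\end{equation}
since then, splitting $\|u\|_{H^1}\le\|\nabla u\|+\|u\|$ and using Young's inequality, \eqref{eq:pf-layer} and \eqref{eq:pf-trace} imply \eqref{eq:pf-inf}. (If $\dd\mu_\bot=w_\bot(t)\,\dd t$ with $w_\bot\in L^\infty(\cI)$ one can bypass all this, the left-hand side of \eqref{eq:pf-inf} then being bounded by $\|w_\bot\|_{L^\infty(\cI)}\|u\|_{L^2(\dR^2)}^2$; the purpose of \eqref{eq:pf-layer}--\eqref{eq:pf-trace} is to include the singular case $\mu_\bot=\aa\dl_t$ on the same footing.)

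For \eqref{eq:pf-trace} I would argue by density for $u\in C_0^\infty(\dR^2)$ and work in the parallel coordinates $\Phi(s,r):=\s(s)+r\nu(s)$ on a slightly enlarged admissible tube $\Pi_{\cI'}(\Sg)$, $\cI'=[-d_-',d_+']$ with $d_-<d_-'<D_-$ and $d_+<d_+'<D_+$, on which by \eqref{eq:Jac} the Jacobian $J(s,r)=1+\kp(s)r$ stays between two positive constants. For fixed $t\in\cI$ and $s\in[0,L)$ the fundamental theorem of calculus gives $|(u\circ\Phi)(s,t)|^2=-\int_t^{d_+'}\partial_r|(u\circ\Phi)(s,r)|^2\,\dd r$, so that for any $\lambda>0$
\[
|(u\circ\Phi)(s,t)|^2\le\frac1\lambda\int_{-d_-'}^{d_+'}|(u\circ\Phi)(s,r)|^2\,\dd r+\lambda\int_{-d_-'}^{d_+'}|\nabla u(\Phi(s,r))|^2\,\dd r,
\]
where we used $|\partial_r(u\circ\Phi)|=|\nabla u(\Phi)\cdot\nu|\le|\nabla u(\Phi)|$. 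Multiplying by $1+\kp(s)t$, integrating over $s\in[0,L)$, and invoking the two-sided bound on $J$ to convert the $(s,r)$-integrals into integrals over $\Pi_{\cI'}(\Sg)\subset\dR^2$, one arrives at $\int_{\Sg_t}|u|^2\,\dd\ell\le\lambda^{-1}C'\|u\|_{L^2(\dR^2)}^2+\lambda C'\|\nabla u\|_{L^2(\dR^2;\dC^2)}^2$ with $C'$ independent of $t$, which yields \eqref{eq:pf-trace} (and, optimizing in $\lambda$, the $\varepsilon$--$C_\varepsilon$ form directly). By density and the resulting uniform bound the trace $u\mapsto u|_{\Sg_t}$ extends continuously to $H^1(\dR^2)$, which also legitimizes \eqref{eq:pf-layer} for $u\in H^1(\dR^2)$.

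The step I expect to be the main obstacle is exactly \eqref{eq:pf-trace}: making the transversal trace estimate uniform in $t\in\cI$ while keeping the coefficient of $\|\nabla u\|^2$ arbitrarily small. This is what forces working on the enlarged tube $\Pi_{\cI'}(\Sg)$ and keeping control, via \eqref{eq:Jac}, of the Jacobian $1+\kp t$ throughout — where the $C^2$-regularity of $\Sg$ and the admissibility $d_\pm<D_\pm$ enter. A less hands-on alternative would be to quote a ready-made trace theorem for Schr\"odinger operators with interactions supported on curves (in the spirit of the works cited in the introduction, or of the variant proved in Appendix~\ref{app:trace}) and deduce \eqref{eq:pf-inf} from it; the self-contained route via \eqref{eq:Jac} seems the most transparent here.
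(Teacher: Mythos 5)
Your overall strategy is the same as the paper's: reduce everything to a trace inequality on the parallel curves $\Sg_t$ that is uniform in $t\in\cI$ and has an arbitrarily small coefficient in front of $\|\nabla u\|^2$, then conclude via the KLMN/Kato form-perturbation theorem. The layer-cake identity $\int_{\dR^2}|u|^2\,\dd\mu=\int_{\cI}\|u|_{\Sg_t}\|^2_{L^2(\Sg_t)}\,\dd\mu_\bot(t)$ is exactly the paper's first step, and the uniform trace bound is the content of the paper's Lemma~\ref{lem:trace}. The only genuine difference is in how that lemma is proved: the paper compares $\|u|_{\Sg_t}\|^2_{L^2(\Sg_t)}$ with $\|u|_{\Sg}\|^2_{L^2(\Sg)}$ by integrating $\partial_r|u\circ\Phi|^2$ from $0$ to $t$ and then invokes the standard trace theorem on $\Sg$ itself, whereas you prove the trace bound directly inside a slightly enlarged tube without citing an external trace theorem; your route is marginally more self-contained, the paper's avoids the enlargement of the tube.

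One local slip needs repair. The identity $|(u\circ\Phi)(s,t)|^2=-\int_t^{d_+'}\partial_r|(u\circ\Phi)(s,r)|^2\,\dd r$ is false: a function $u\in C_0^\infty(\dR^2)$ has no reason to vanish on $\Sg_{d_+'}$, so the boundary term $|(u\circ\Phi)(s,d_+')|^2$ cannot be dropped (and integrating out to where $u$ does vanish is not available, since the parallel coordinates are only valid for $r<D_+$, which may be finite). The standard fix is to average the base point: writing $|f(t)|^2=|f(r)|^2+\int_r^t\partial_\rho|f(\rho)|^2\,\dd\rho$ and averaging over $r\in\cI'$ gives
\begin{equation*}
|f(t)|^2\le\Bigl(\tfrac{1}{d_-'+d_+'}+\tfrac{1}{\lambda}\Bigr)\int_{-d_-'}^{d_+'}|f(r)|^2\,\dd r+\lambda\int_{-d_-'}^{d_+'}|f'(r)|^2\,\dd r ,
\end{equation*}
which differs from your displayed inequality only in the coefficient of the $L^2$ term and therefore still yields the trace bound with an arbitrarily small coefficient $\lambda C'$ on $\|\nabla u\|^2$ — the only thing that matters for the form bound $<1$. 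With that correction, and with the two-sided control of the Jacobian from~\eqref{eq:Jac} as you indicate, the argument is complete.
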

\begin{proof}
	It is clear that the quadratic form $\frh_\mu$
	is symmetric. Moreover, since the Sobolev space $H^1(\dR^2)$ is dense in $L^2(\dR^2)$, the quadratic form $\frh_\mu$ is also densely defined. It remains to show that the form $\frh_\mu$ is closed and lower-semibounded.
	By Lemma~\ref{lem:trace}, proven in Appendix~\ref{app:trace}, for any $\eps' > 0$ there is a
	constant $C'(\eps') > 0$ such that for any $t\in \cI$ the following inequality
	\[
		\|u|_{\Sg_t}\|^2_{L^2(\Sg_t)} \le
		\eps'\|\nabla u\|^2_{L^2(\dR^2;\dC^2)} + C'(\eps')\|u\|^2_{L^2(\dR^2)}
	\]
	holds for all $u\in H^1(\dR^2)$. It is essential here that $C'$ is independent of $t$.	
	Hence, we get the following estimate
	\[
	\begin{aligned}
		\int_{\dR^2}|u|^2\dd \mu
		& = \int_{-d_-}^{d_+}\int_0^L |u(\s(s)+t\nu(s))|^2(1+\kp(s)t)\dd s \,\dd\mu_\bot(t) \\
		&=
		\int_{-d_-}^{d_+}\|u|_{\Sg_t}\|^2_{L^2(\Sg_t)} \,\dd\mu_\bot(t)
		\le \eps'\mu_\bot(\cI)\|\nabla u\|^2_{L^2(\dR^2;\dC^2)} + C'(\eps')\mu_\bot(\cI)\|u\|^2_{L^2(\dR^2)}
	\end{aligned}
	\]
	for all $u\in H^1(\dR^2)$.
	Setting $\eps' =\frac{\eps}{\mu_\bot(\cI)}$ in the above inequality we infer that for all $\eps > 0$ there exists a constant, explicitly given by $C(\eps) := C'(\frac{\eps}{\mu_\bot(\cI)})\mu_\bot(\cI)$,
	such that
	\begin{equation}\label{eq:muineq}
		\int_{\dR^2}|u|^2\dd\mu \le \eps\|\nabla u\|^2_{L^2(\dR^2;\dC^2)} + C(\eps)\|u\|^2_{L^2(\dR^2)}
	\end{equation}
	for all $u\in H^1(\dR^2)$. This means that the densely defined symmetric form $\frh_\mu$ is form bounded with respect to the closed, densely defined, symmetric, and lower semi-bounded form $H^1(\dR^2)\ni u\mapsto \|\nabla u\|^2_{L^2(\dR^2;\dC^2)}$ with the form bound $< 1$,  and by~\cite[Thm. VI.1.33]{Kato} it follows that the form $\frh_\mu$ is closed and semi-bounded as well.
\end{proof}
\begin{dfn}\label{def:Op}
	The self-adjoint Schr\"odinger operator $\Op$ in the Hilbert space $L^2(\dR^2)$ is associated to the form $\frh_{\mu}$ in~\eqref{eq:form} via the first representation theorem~\cite[Thm. VI.2.1]{Kato}.
\end{dfn}
The essential spectrum of $\Op$ can be characterised explicitly and does not depend on the measure $\mu$.
\begin{prop}\label{prop:ess}
	The essential spectrum of $\Op$ coincides with $[0,\infty)$.
\end{prop}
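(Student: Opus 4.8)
The plan is to prove the two inclusions $[0,\infty)\subseteq\sigma_{\mathrm{ess}}(\Op)$ and $\sigma_{\mathrm{ess}}(\Op)\subseteq[0,\infty)$ separately, using throughout that $\mu$ is supported in the compact set $\Pi_\cI(\Sg)$; fix once and for all a ball $B\subset\dR^2$ with $\Pi_\cI(\Sg)\subset B$, so that $\mu$ vanishes on $\dR^2\sm\ov B$. For the inclusion $[0,\infty)\subseteq\sigma_{\mathrm{ess}}(\Op)$ I would exhibit, for each $\lambda\ge0$, a singular sequence escaping to infinity, where $\Op$ acts as the free Laplacian. Writing $\lambda=|\xi|^2$ with $\xi\in\dR^2$, fix $\psi\in C_c^\infty(\dR^2)$ with $\|\psi\|_{L^2(\dR^2)}=1$, and choose points $y_n\in\dR^2$ with $|y_n|\to\infty$ such that the supports of the functions $u_n(x):=n^{-1}\psi\big((x-y_n)/n\big)\,e^{\mathrm{i}\xi\cdot x}$ are pairwise disjoint and disjoint from $B$. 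Each $u_n$ is smooth, compactly supported, and vanishes on $\operatorname{supp}\mu$, hence lies in $\dom\Op$ with $\Op u_n=-\Delta u_n$; moreover $\|u_n\|_{L^2(\dR^2)}=1$ and an elementary differentiation gives $\|(\Op-\lambda)u_n\|_{L^2(\dR^2)}=\|(-\Delta-\lambda)u_n\|_{L^2(\dR^2)}=O(n^{-1})$, the error arising only from derivatives falling on the slowly varying envelope $\psi\big((\cdot-y_n)/n\big)$. Since the $u_n$ are orthonormal, their supports being pairwise disjoint, Weyl's criterion yields $\lambda\in\sigma_{\mathrm{ess}}(\Op)$, and as the essential spectrum is closed the inclusion follows.

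For the reverse inclusion I would use Neumann bracketing to decouple the region carrying the interaction from its exterior. On $L^2(\dR^2)$, identified with $L^2(B)\oplus L^2(\dR^2\sm\ov B)$, consider the quadratic form obtained from $\frh_\mu$ by dropping the coupling across $\partial B$, that is, with domain $H^1(B)\oplus H^1(\dR^2\sm\ov B)$. Because $\operatorname{supp}\mu\subset B$, this form splits as the sum of the form of a Schrödinger operator $\Op_B$ on the bounded domain $B$ carrying the same measure perturbation -- which is closed and lower semibounded by the argument proving the Proposition above, run on $B$ with the trace estimate on $H^1(B)$ deduced from Lemma~\ref{lem:trace} via a bounded extension operator $H^1(B)\to H^1(\dR^2)$ -- plus the form of the Neumann Laplacian $-\Delta_N^{\dR^2\sm\ov B}$ on the exterior domain. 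Since this decoupled form has a larger domain than $\frh_\mu$ and coincides with it on $H^1(\dR^2)$, the min-max principle shows that each min-max value of the associated self-adjoint operator $\Op_B\oplus(-\Delta_N^{\dR^2\sm\ov B})$ does not exceed the corresponding one of $\Op$, whence $\inf\sigma_{\mathrm{ess}}(\Op)\ge\inf\sigma_{\mathrm{ess}}\big(\Op_B\oplus(-\Delta_N^{\dR^2\sm\ov B})\big)$. Being a form-bounded perturbation of relative bound less than one of the Neumann Laplacian on the bounded smooth domain $B$, the operator $\Op_B$ has compact resolvent and hence empty essential spectrum, while $-\Delta_N^{\dR^2\sm\ov B}$ is nonnegative with essential spectrum $[0,\infty)$; therefore the right-hand side equals $0$, which gives $\sigma_{\mathrm{ess}}(\Op)\subseteq[0,\infty)$ and, together with the first step, the claim.

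The Weyl-sequence estimate is routine; the step demanding the most care is the bracketing -- concretely, verifying that the decoupled form remains closed and lower semibounded in the presence of the singular measure (which reduces to repeating the proof of the Proposition above on $B$ through an extension operator) and invoking the standard fact that the Neumann Laplacian on the exterior domain $\dR^2\sm\ov B$ has essential spectrum $[0,\infty)$. An alternative route avoiding the bracketing is to establish directly that the trace map $H^1(\dR^2)\to L^2(\dR^2,\mu)$ is compact -- a compact strengthening of Lemma~\ref{lem:trace}, legitimate because $\operatorname{supp}\mu$ is bounded -- so that $\mu$ is a relatively form-compact perturbation of $-\Delta$ and consequently $\sigma_{\mathrm{ess}}(\Op)=\sigma_{\mathrm{ess}}(-\Delta)=[0,\infty)$.
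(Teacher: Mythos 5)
Your argument is correct, but it follows a genuinely different route from the paper, which settles the proposition in two lines: inequality~\eqref{eq:muineq} shows that $\mu$ belongs to the generalized Kato class treated in~\cite{BEKS94}, and since $\mu$ is finite, \cite[Thm.~3.1]{BEKS94} gives $\sigma_{\mathrm{ess}}(\Op)=[0,\infty)$ at once. That citation is essentially the ``alternative route'' you sketch in your last sentence, since the proof in~\cite{BEKS94} rests on the relative form-compactness of the measure perturbation. What you do instead is reprove the statement from scratch: Weyl sequences modulated by $e^{\ii\xi\cdot x}$ and supported far from $\Pi_\cI(\Sg)$, where $\Op$ acts as the free Laplacian, give $[0,\infty)\subseteq\sigma_{\mathrm{ess}}(\Op)$; Neumann bracketing along $\partial B$ gives the reverse inclusion, because the decoupled form is a lower bound for $\frh_\mu$ in the form ordering, so its min-max values lie below those of $\Op$ and hence $\inf\sigma_{\mathrm{ess}}(\Op)\ge\inf\sigma_{\mathrm{ess}}\big(\Op_B\oplus(-\Delta_N^{\dR^2\sm\ov{B}})\big)=0$, the first summand having compact resolvent. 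Both halves are sound; the only point deserving a word more of care is the closedness of the form of $\Op_B$ on $H^1(B)$: to transplant Lemma~\ref{lem:trace} you should use an extension operator that is bounded simultaneously from $L^2(B)$ to $L^2(\dR^2)$ and from $H^1(B)$ to $H^1(\dR^2)$ (as the standard reflection or Stein extension is), since with a generic bounded map $H^1(B)\to H^1(\dR^2)$ the large constant $C(\eps)$ would contaminate the gradient term and could spoil the relative bound $<1$. The trade-off between the two proofs is the usual one: the paper's is shorter and applies to a wider class of measures at the price of an external reference, while yours is self-contained and elementary, and in particular makes visible where the two inclusions actually come from.
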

\begin{proof}
	In view of inequality~\eqref{eq:muineq} the measure $\mu$ belongs to the class considered in~\cite{BEKS94}. Since the measure $\mu$ is finite, the characterisation of the essential spectrum follows directly from~\cite[Thm. 3.1]{BEKS94}.
\end{proof}
Furthermore, the criticality of the Laplacian in two dimensions yields the following property of the discrete spectrum of $\Op$.
\begin{prop}
	The discrete spectrum of $\Op$ in $(-\infty,0)$ is non-empty.
\end{prop}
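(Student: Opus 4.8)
The plan is to combine the min-max principle with the criticality of the two-dimensional Laplacian. Since $\sigma_{\mathrm{ess}}(\Op) = [0,\infty)$ by Proposition~\ref{prop:ess}, it is enough to produce a single trial function $u\in H^1(\dR^2)$ with $\frh_\mu[u] < 0$: the variational characterisation of the bottom of the spectrum then gives $\inf\sigma(\Op) \le \frh_\mu[u]/\|u\|^2_{L^2(\dR^2)} < 0$, and any spectral point lying strictly below $\inf\sigma_{\mathrm{ess}}(\Op) = 0$ is an isolated eigenvalue of finite multiplicity, hence an element of $\sigma_{\mathrm{disc}}(\Op)\cap(-\infty,0)$.

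To build such a $u$ I would use the standard logarithmic cut-off in the plane. Fix $R_0 > 0$ so large that the compact strip satisfies $\Pi_\cI(\Sg)\subseteq\{|x|\le R_0\}$, and for an integer $n\ge 2$ set
\[
	\phi_n(x) :=
	\begin{cases}
		1, & |x|\le R_0,\\[1mm]
		\dfrac{\log(R_0 n) - \log|x|}{\log n}, & R_0\le |x|\le R_0 n,\\[2mm]
		0, & |x|\ge R_0 n.
	\end{cases}
\]
Then $\phi_n$ is Lipschitz and compactly supported, so $\phi_n\in H^1(\dR^2)$; its gradient is supported in the annulus $\{R_0\le|x|\le R_0 n\}$, where $|\nabla\phi_n(x)| = (|x|\log n)^{-1}$, and a computation in polar coordinates yields
\[
	\|\nabla\phi_n\|^2_{L^2(\dR^2;\dC^2)} = \frac{1}{(\log n)^2}\int_{R_0}^{R_0 n}\frac{2\pi}{r}\,\dd r = \frac{2\pi}{\log n}\longrightarrow 0,\qquad n\to\infty.
\]
On the other hand $\phi_n\equiv 1$ on $\{|x|\le R_0\}\supseteq\Pi_\cI(\Sg)\supseteq\operatorname{supp}\mu$, so by~\eqref{eq:potentials}, using $J(s,t) = 1+\kp(s)t > 0$ (\cf~\eqref{eq:Jac}) together with the standing assumption that $\mu_\bot$ is nonzero,
\[
	\int_{\dR^2}|\phi_n|^2\,\dd\mu = \mu(\dR^2) = \int_0^L\int_{-d_-}^{d_+}(1+\kp(s)t)\,\dd\mu_\bot(t)\,\dd s =: m > 0 .
\]
Hence $\frh_\mu[\phi_n] = \tfrac{2\pi}{\log n} - m$, which is negative as soon as $n$ is chosen with $\log n > 2\pi/m$, and the claim follows.

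I do not expect a genuine obstacle here: the argument is in essence that a nontrivial finite attractive measure perturbation always binds in two dimensions, and the only place the dimension enters is the logarithmic truncation, whose Dirichlet energy tends to $0$ (the analogous construction fails for $\dR^n$ with $n\ge 3$, where a weak attractive perturbation may not bind). The minor points to verify — that $\phi_n$ belongs to $H^1(\dR^2)$ and that $\int_{\dR^2}|\phi_n|^2\,\dd\mu$ stays bounded away from $0$ — are both immediate because $\mu$ is finite and compactly supported in $\Pi_\cI(\Sg)$.
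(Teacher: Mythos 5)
Your argument is correct and is essentially identical to the paper's own proof: both use the two-dimensional logarithmic cut-off (the paper interpolates between $|x|=n$ and $|x|=n^2$, you between $R_0$ and $R_0 n$, which changes nothing) to make the Dirichlet energy $2\pi/\log n$ vanish while the potential term stays pinned at $\mu(\dR^2)>0$, and then both invoke the min-max principle together with Proposition~\ref{prop:ess}. No gaps; nothing further to add.
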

\begin{proof}
	The argument relies on the construction of an appropriate sequence of test functions for the min-max principle. This sequence $\{\varphi_n\}_{n\in\dN}$ is explicitly given by
	\[
		\varphi_n(x) = \begin{cases}
		1 &\quad \text{if}\;\; |x| < n,\\
		\frac{\log n^2 - \log|x|}{\log n^2-\log n}&\quad \text{if}\;\; n \le |x|< n^2,\\
		0 &\quad \text{if}\;\; |x| \ge n^2.
		\end{cases}
	\]
	Clearly, we have $\varphi_n\in H^1(\dR^2)$ for all $n\in\dN$.
	Next, we observe that
	\begin{equation}\label{eq:nablaphin}
		\|\nabla \varphi_n\|^2_{L^2(\dR^2;\dC^2)} =
		\frac{2\pi}{(\log n)^2}
		\int_{n}^{n^2}\frac{1}{r}\,\dd r = \frac{2\pi}{\log n}\arr 0 \quad\text{holds as}\; n\arr\infty,
	\end{equation}
	and moreover, since $\mu$ is supposed to be nonzero, we have for all $n$ large enough
	\begin{equation}\label{eq:muphin}
		\int_{\dR^2}|\varphi_n|^2\,\dd \mu =
		\int_{\dR^2}\dd \mu = \mu(\dR^2) > 0.
	\end{equation}
	Combining~\eqref{eq:nablaphin} and~\eqref{eq:muphin}, we conclude that
	$\frh_\mu[\varphi_n] < 0$ holds for all $n\in\dN$ large enough. In view of Proposition~\ref{prop:ess}, it follows then from the min-max principle that the negative discrete spectrum of $\Op$ is non-empty.
\end{proof}
We denote by $\lm_1(\mu) < 0$ the lowest eigenvalue of $\Op$. Using again the min-max principle, we can characterize this eigenvalue as
\begin{equation}\label{eq:lm1}
	\lm_1(\mu) =
	\inf_{u\in H^1(\dR^2)\sm\{0\}}
	\frac{\frh_\mu[u]}{\|u\|^2_{L^2(\dR^2)}}.
\end{equation}

We remark that for $\dd\mu(x) = V(x)\dd x$ with $V$ given in~\eqref{eq:V} the Hamiltonian $\Op$ can be alternatively characterized via its action and the operator domain,
\[
	H^2(\dR^2)\ni u\mapsto -\Delta u - Vu.
\]
Similar Schr\"odinger operators have been considered recently~\cite{E20, KKK20} as a tool to treat soft quantum waveguides.

Concerning the other particular case, $\mu = \aa\delta_\Sg$ with $\aa > 0$, according to~\cite[Def. 3.4, Thm. 3.6]{BLL13}  the Hamiltonian $\sfH_{\aa\dl_\Sg}$ can be also written as a Schr\"odinger operator, this time with $\dl$-interaction of strength $\aa > 0$ supported by the curve $\Sg$. Its action and operator domain are
\[
\{u\in (H^2(\Omg_+)\oplus H^2(\Omg_-))\cap H^1(\dR^2)\colon
[\p_{\nu}u]_{\Sg}
=\aa u|_{\Sg}\}
\ni u\mapsto (-\Delta u_+)\oplus (-\Delta u_-),
\]
where $u_\pm := u|_{\Omg_\pm}$ and $[\p_{\nu}u]_{\Sg} = \nu\cdot(\nabla u_+)|_{\Sg} - \nu\cdot(\nabla u_-)|_\Sg$ stands for the jump of the normal derivative across the interface $\Sg$. This corresponds to the formal differential expression $-\Delta -\aa\delta_\Sg$.

Recall that $\chi$ denotes the characteristic function of the bounded open set $\Omg$ in~\eqref{eq:Omg} surrounded by the curvilinear strip $\Pi_\cI(\Sg)$. Let the coupling constant $\beta > 0$ be arbitrary. Along with the Hamiltonian $\Op$ we consider its bounded additive perturbation $\Op + \beta\chi$. 
According to~\cite[Thm. 3.1]{BEKS94} the essential spectrum of $\Op+\beta\chi$ is the same as of $\Op$ and coincides with $[0,\infty)$. 
We denote by $\lm_1^\beta(\mu)\le0$ the lowest spectral point of $\Op+\beta\chi$ which, in dependence on the value of $\beta$ and the measure $\mu$ is either a negative eigenvalue or zero representing the bottom of the essential spectrum. 
The point $\lm_1^\beta(\mu)$ admits the following variational characterization,
\begin{equation}\label{eq:lm1beta}
	\lm_1^\beta(\mu) =
	\inf_{u\in H^1(\dR^2)\sm\{0\}}
	\frac{\frh_\mu[u] +\beta\int_\Omg|u|^2\dd x}{\|u\|^2_{L^2(\dR^2)}}.
\end{equation}
By~\eqref{eq:lm1beta}, the function $\beta\mapsto \lm_1^\beta(\mu)$ is non-decreasing and it may happen that $\lambda^\beta_1(\mu)=0$; we know, for instance, that for a sign indefinite potential the discrete spectrum is empty in the weak coupling regime provided the integral of the potential is positive~\cite{S76}.
\section{Main results}\label{sec:main}

Before stating the results we have to spell out the assumptions.
\begin{hyp}\label{hyp:main}
	Let a $C^2$-smooth curve $\Sg$ and the circle $\cC$ be such that $|\Sg| = |\cC| = L >0$. Let $d_\pm \in (0,D_\pm(\Sg))$ and the transversal Radon measure $\mu_\bot$
	on $\cI = [-d_-,d_+]$, nonzero and finite, be fixed.
	Let further the measure $\mu$ be associated with $\Sg$ and $\mu_\bot$ as in \eqref{eq:potentials}
	and let the measure $\mu_\circ$ be associated with $\cC$ and $\mu_\bot$ as in~\eqref{eq:potentials2}.
	Finally, let the Schr\"odinger operators $\Op$ and $\sfH_{\mu_\circ}$ be as in Definition~\ref{def:Op}.
\end{hyp}

In the first main result we maximize the lowest eigenvalue of $\Op$ with respect to variation of the shape of the curvilinear strip supporting the measure $\mu$, while the transversal measure $\mu_\bot$ remains fixed.

\begin{thm}\label{thm1}
	Assume that Hypothesis~\ref{hyp:main} holds. Then the lowest negative eigenvalues $\lm_1(\mu)$ and $\lm_1(\mu_\circ)$, respectively, of $\Op$ and of $\sfH_{\mu_\circ}$ satisfy the inequality
	\[
		\lm_1(\mu) \le \lm_1(\mu_\circ).
	\]
\end{thm}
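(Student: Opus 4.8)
The plan is to use the variational characterization \eqref{eq:lm1} of $\lm_1(\mu_\circ)$ together with a transplantation of near-minimizers from the annular configuration to the generic curvilinear strip. The starting point is the observation (to be established in Section~\ref{sec:annulus}) that the ground state $\psi$ of $\sfH_{\mu_\circ}$ is radially symmetric, hence of the form $\psi(x) = f(|x|)$ for a function $f$ that decays at infinity. Since smooth, compactly supported, radially symmetric functions are dense in the relevant form domain, for any $\eps > 0$ I would fix a function $g \in C_0^\infty(\dR^2)$ that is radially symmetric, $g(x) = \varphi(|x|)$, and for which the Rayleigh quotient of $\sfH_{\mu_\circ}$ at $g$ is below $\lm_1(\mu_\circ) + \eps$. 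It suffices to construct, from $g$, a test function $u$ for $\sfH_\mu$ with $\frh_\mu[u]/\|u\|_{L^2}^2 \le \frh_{\mu_\circ}[g]/\|g\|_{L^2(\dR^2)}^2$; then \eqref{eq:lm1} gives $\lm_1(\mu) \le \lm_1(\mu_\circ) + \eps$, and letting $\eps\to0$ finishes the argument.

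The transplantation is carried out with parallel coordinates, developed in Section~\ref{sec:parallel}. For the circle $\cC$ of radius $R = L/2\pi$, a point at signed distance $t$ from $\cC$ lies at Euclidean distance $R + t$ from the centre, so $g$ evaluated there equals $\varphi(R+t)$, a function of $t$ alone; outside the annulus the same description extends by the distance function from $\cC$. For the generic curve $\Sg$, I define $u$ on the strip $\Pi_\cI(\Sg)$ and on the exterior and interior collar-type regions by the same profile, i.e. $u(\s(s)+t\nu(s)) := \varphi(R+t)$ where $t$ now ranges over the appropriate interval of definition of parallel coordinates, and $u$ equals the constant value $\varphi$ takes at the relevant endpoint (for $\Sg$ one must cover $\Omg_+ \setminus \Pi_\cI(\Sg)$ and the unbounded exterior — this is where the monotone/constant-continuation part of $\varphi$ near the origin and near infinity is used). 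The key comparison is then threefold, exactly in the spirit of \cite{AFK17, BFNT18, FK15, KL18, KL20}: first, the potential term is \emph{preserved}, because by \eqref{eq:potentials} and \eqref{eq:potentials2}
\[
\int_{\dR^2}|u|^2\dd\mu = \int_0^L\int_{-d_-}^{d_+}|\varphi(R+t)|^2(1+\kp(s)t)\dd\mu_\bot(t)\dd s = \int_0^L\int_{-d_-}^{d_+}|\varphi(R+t)|^2\Big(1+\tfrac{t}{R}\Big)\dd\mu_\bot(t)\dd s,
\]
the last equality holding because $\int_0^L \kp(s)\dd s = 2\pi = L/R$ for a closed curve (the total curvature), so the $s$-integration of the Jacobian weight gives the same value $L(1 + t/R)$ for $\Sg$ as for $\cC$; thus this integral equals $\int_{\dR^2}|g|^2\dd\mu_\circ$. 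Second, $\|u\|_{L^2(\dR^2)}^2 \ge \|g\|_{L^2(\dR^2)}^2$: in parallel coordinates the volume element on the strip is $J(s,t)\dd s\dd t = (1+t\kp(s))\dd s\dd t$, and comparing $\int_0^L(1+t\kp(s))\dd s = L(1+t/R)$ with the annular weight shows the strip contributes the same; the exterior and interior pieces, where $u$ is constant, contribute at least as much as the corresponding annular regions by an isoperimetric-type inequality relating the area enclosed by $\Sg_t$ to that enclosed by the circle of the same perimeter — here I would use that $\Sg$ and $\cC$ have equal length. Third, $\|\nabla u\|_{L^2(\dR^2;\dC^2)}^2 \le \|\nabla g\|_{L^2(\dR^2;\dC^2)}^2$, which follows because $|\nabla u|$ at a point of the strip equals $|\varphi'(R+t)|$ (the gradient of a function of the distance coordinate only has modulus $|\partial_t|$, the $s$-derivative being zero), and integrating against $J(s,t)\dd s\dd t$ and using $\int_0^L(1+t\kp(s))\dd s = L(1+t/R)$ again gives precisely the annular Dirichlet integral over the strip; on the regions where $u$ is constant the gradient vanishes and contributes nothing, while $g$ contributes a nonnegative amount there.

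The main obstacle I anticipate is not the strip itself — there the three comparisons are exact equalities driven entirely by the total-curvature identity $\int_0^L\kp\,\dd s = 2\pi$ — but rather the bookkeeping on the regions \emph{outside} the domain of validity of parallel coordinates, together with making sure the continuation of $u$ there is in $H^1(\dR^2)$ and satisfies the right inequalities. Concretely: (i) one needs $D_+(\Sg)$ possibly finite, so the exterior collar around $\Sg$ where $t\in(d_+,D_+)$ may be bounded, and beyond it $u$ must be defined by hand (constant, equal to the outer boundary value of $\varphi$) on the rest of $\Omg_-$; since $\varphi$ is compactly supported we can arrange that this outer value is $0$ once $d_+$ is large enough relative to the support of $\varphi$ — but $d_+ < D_+(\Sg)$ is fixed, so instead one transplants the \emph{tail} of $\varphi$ radially using the distance function $\mathrm{dist}(\cdot,\Sg)$, whose gradient has unit modulus a.e., and invokes the coarea formula with the inequality $|\Sg_t| \le$ (length of level set) controlled via the fact that the enclosed area grows at least as fast as for a disk; (ii) symmetrically on the inner side one fills $\Omg_+\setminus\Pi_\cI(\Sg)$ with the constant value $\varphi(R-d_-)$ and must check that the area of this region is at most that of the disk $\Omg_\circ$ of radius $R-d_-$, which again is an equal-perimeter isoperimetric comparison for $\Sg_{-d_-}$. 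Handling these exterior/interior contributions carefully — showing the $L^2$-norm does not decrease and the Dirichlet energy does not increase on them — is the technical heart; everything else is the clean algebra above, and the radial symmetry of the annular ground state established in Section~\ref{sec:annulus} is what makes the transplanted profile depend on $t$ alone and thus kills the longitudinal part of the gradient.
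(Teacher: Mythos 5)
Your overall strategy --- transplanting a radial near-minimizer of the annular problem onto the generic configuration via parallel coordinates, preserving the potential term through the total curvature identity $\int_0^L\kp(s)\,\dd s=2\pi$ --- is exactly the paper's strategy, and your treatment of the potential term and of the kinetic term on the strip is correct. There is, however, a genuine gap at the heart of the comparison: the $L^2$-norm inequality is asserted in the \emph{wrong direction}, and with the three inequalities as you state them ($\frh_\mu[u]\le\frh_{\mu_\circ}[g]<0$ and $\|u\|^2\ge\|g\|^2$) the Rayleigh quotients simply do not compare: take $A=-1.01$, $A'=-1$, $B=100$, $B'=1$; then $A\le A'<0$ and $B\ge B'$, yet $A/B=-0.0101>-1=A'/B'$. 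Because the numerator $\frh_{\mu_\circ}[g]$ is negative, dividing by a \emph{larger} denominator makes the quotient larger, so what the argument requires is that the transplantation does \emph{not increase} the $L^2$-norm, $\|u\|^2\le\|g\|^2$. Fortunately that is what is actually true, but your isoperimetric reasoning for it is also backwards: for a closed curve of length $L$ the disk encloses the \emph{most} area, so the level sets of the distance function satisfy $L_+(t)\le L-2\pi t$ inside and $L_-(t)\le L+2\pi t$ outside (this is the content of Proposition~\ref{prop_savo}), whence by the co-area formula both the Dirichlet integral and the $L^2$-norm of the transplanted function are $\le$ their radial counterparts, and $|\Omg|\le\pi(R-d_-)^2=|\Omg_\circ|$ --- the interior and exterior pieces contribute \emph{at most}, not at least, as much as the corresponding annular regions.

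A second, related problem is your constant continuation $u\equiv\psi(R-d_-)$ on the inner region $\Omg$. The radial ground state of the annular problem is \emph{increasing} on $[0,R-d_-]$ (from $(r\psi')'=-\lm_1 r\psi>0$ and $r\psi'|_{r=0}=0$), so $\psi(R-d_-)$ is its maximum there and $|\psi(R-d_-)|^2|\Omg|$ exceeds $2\pi\int_0^{R-d_-}|\psi(r)|^2 r\,\dd r$ already when $\Sg=\cC$; the constant fill therefore strictly \emph{increases} the $L^2$-norm relative to $g$ and breaks the (correctly oriented) comparison. The fix, which is what the paper does, is to transplant the full radial profile everywhere: set $u_\psi=(\psi_+\circ\rho_+)\oplus(\psi_-\circ\rho_-)$ with $\psi_\pm(t)=\psi(R\mp t)$ and $\rho_\pm=\mathrm{dist}(\cdot,\Sg)$ on $\Omg_\pm$, so that the comparison is carried out level set by level set via the co-area formula on all of $\dR^2$ (the distance functions are globally Lipschitz with $|\nabla\rho_\pm|=1$ a.e., which also disposes of your worry about $D_+(\Sg)$ being finite). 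With $\|\nabla u_\psi\|^2\le 2\pi\int_0^\infty|\psi'|^2r\,\dd r$, $\|u_\psi\|^2\le 2\pi\int_0^\infty|\psi|^2r\,\dd r$, the potential term preserved, and the negativity of the annular quotient, the conclusion then follows from~\eqref{eq:lm1} and Proposition~\ref{prop:annulus}.
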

The proof of Theorem~\ref{thm1} relies on the min-max principle applied on the level of quadratic forms. Appropriate test functions for the variational characterization of $\lm_1(\mu)$ in~\eqref{eq:lm1} are constructed by means of transplantations of smooth compactly supported approximations of the radial ground state of $\sfH_{\mu_\circ}$.
This transplantation is performed using the method of parallel coordinates and the transplanted functions depend essentially on the distance to $\Sg$. In the course of the analysis we employ the co-area formula to prove that under transplantation the kinetic energy $\int_{\dR^2}|\nabla u|^2\dd x$ and the $L^2$-norm of the function do not increase, and we use the total curvature identity to show that the potential energy  $\int_{\dR^2}|u|^2\dd\mu$ is preserved. This strategy of the proof is simple but powerful; it was recently successfully applied to the Robin Laplacian on bounded domains~\cite{AFK17, BFNT18, FK15}, on 2-manifolds~\cite{KL19}, and on exterior domains~\cite{KL18, KL20}, and also to the two-dimensional Schr\"odinger operator with $\dl'$-interaction supported on a closed curve~\cite{L18}.
\begin{remark}
	In the particular case $\mu_\bot = \aa\delta_0$ the claim of Theorem~\ref{thm1} implies the spectral isoperimetric inequality for Schr\"odinger operators with $\dl$-interactions supported on closed curves of fixed length~\cite[Thm. 4.1]{EHL06}, proven there by a different method, via the Birman-Schwinger principle.
\end{remark}
\begin{remark}
	We conjecture that the strict inequality, $\lm_1(\mu) < \lm_1(\mu_\circ)$, holds provided that $\Sg$ is not congruent with $\cC$.
\end{remark}

As a variation on the first main result, we slightly generalize Theorem~\ref{thm1} as follows:
\begin{thm}\label{thm1b}
	Adopt again Hypothesis~\ref{hyp:main}. Let the coupling constant $\beta > 0$ be fixed and let the characteristic functions $\chi$ and $\chi_\circ$ be as in Subsection~\ref{ssec:strips}. Then the lowest spectral points $\lm_1^\beta(\mu)$ and $\lm_1^\beta(\mu_\circ)$, respectively, of $\Op +\beta\chi$ and of $\sfH_{\mu_\circ}+\beta\chi_\circ$ satisfy the inequality
	\[
		\lm_1^\beta(\mu) \le \lm_1^\beta(\mu_\circ).
	\]
\end{thm}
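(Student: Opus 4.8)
The plan is to reproduce the proof of Theorem~\ref{thm1} almost verbatim, now using a (mollified) ground state of $\sfH_{\mu_\circ}+\beta\chi_\circ$ in place of that of $\sfH_{\mu_\circ}$, and then to check that the extra term $\beta\int_\Omg|u|^2\dd x$ is already controlled by quantities produced in that proof. First I would dispose of the degenerate case: if $\lm_1^\beta(\mu_\circ)=0$ there is nothing to prove, since $\lm_1^\beta(\mu)\le 0$ always holds by~\eqref{eq:lm1beta}. So assume $\lm_1^\beta(\mu_\circ)<0$. Since the essential spectrum of $\sfH_{\mu_\circ}+\beta\chi_\circ$ equals $[0,\infty)$, the value $\lm_1^\beta(\mu_\circ)$ is then a discrete eigenvalue; because the operator commutes with rotations about the centre of $\cC$ and its lowest eigenvalue is simple (the associated semigroup being positivity improving), the corresponding eigenfunction $\psi_\circ\in H^1(\dR^2)$ may be chosen positive and radially symmetric.

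Next I would approximate $\psi_\circ$ in $H^1(\dR^2)$ by radially symmetric functions $v_n\in C_0^\infty(\dR^2)$; by~\eqref{eq:muineq} and the boundedness of $\chi_\circ$ the corresponding Rayleigh quotients in~\eqref{eq:lm1beta} for $\sfH_{\mu_\circ}+\beta\chi_\circ$ converge to $\lm_1^\beta(\mu_\circ)$, hence are negative for all large $n$. Now transplant each $v_n$ to a test function $u_n\in H^1(\dR^2)$ for $\Op+\beta\chi$ by the parallel-coordinate construction used in the proof of Theorem~\ref{thm1}. That construction already supplies, for every $n$,
\[
	\|\nabla u_n\|^2_{L^2(\dR^2;\dC^2)}\le\|\nabla v_n\|^2_{L^2(\dR^2;\dC^2)},\qquad
	\|u_n\|^2_{L^2(\dR^2)}\le\|v_n\|^2_{L^2(\dR^2)},\qquad
	\int_{\dR^2}|u_n|^2\dd\mu=\int_{\dR^2}|v_n|^2\dd\mu_\circ .
\]
The key point is that the second inequality is obtained region by region via the co-area formula together with the total-curvature identity $\int_0^L\kp\,\dd s=2\pi$, and its contribution from the set $\Omg$ surrounded by the strip is exactly the inequality $\int_\Omg|u_n|^2\dd x\le\int_{\Omg_\circ}|v_n|^2\dd x$ that is needed for the perturbation term.

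Combining these four relations with $\beta>0$ and the variational characterisation~\eqref{eq:lm1beta} of $\lm_1^\beta(\mu)$, and writing $Q_\circ[v]:=\frh_{\mu_\circ}[v]+\beta\int_{\Omg_\circ}|v|^2\dd x$, I obtain
\[
	\lm_1^\beta(\mu)\ \le\ \frac{\frh_\mu[u_n]+\beta\int_\Omg|u_n|^2\dd x}{\|u_n\|^2_{L^2(\dR^2)}}
	\ \le\ \frac{Q_\circ[v_n]}{\|u_n\|^2_{L^2(\dR^2)}}
	\ \le\ \frac{Q_\circ[v_n]}{\|v_n\|^2_{L^2(\dR^2)}},
\]
where the last step uses $Q_\circ[v_n]<0$ together with $\|u_n\|^2\le\|v_n\|^2$; letting $n\to\infty$ gives $\lm_1^\beta(\mu)\le\lm_1^\beta(\mu_\circ)$.

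I do not expect a real obstacle specific to this statement: the analytic heart consists of the three transplantation estimates, which are inherited unchanged from Theorem~\ref{thm1}, and the only genuinely new observations are that the perturbation reduces to the inner-region piece of the already-established $L^2$-monotonicity and that the case $\lm_1^\beta(\mu_\circ)=0$ must be excluded at the outset, since there $\lm_1^\beta(\mu_\circ)$ need not be an eigenvalue and $\psi_\circ$ need not exist. If any step deserves care, it is the one shared with Theorem~\ref{thm1}: verifying that under the parallel-coordinate transplantation the $L^2$-mass on $\Omg$ is dominated by that on the disk $\Omg_\circ$, which is where the total-curvature identity and the structure of the level sets of $\mathrm{dist}(\cdot,\Sg)$ inside $\Omg_+$ enter.
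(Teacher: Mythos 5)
Your proposal is correct and follows essentially the same route as the paper: dispose of the case $\lm_1^\beta(\mu_\circ)=0$, use radial test functions for the annulus problem, transplant them via parallel coordinates, and observe that the perturbation term is controlled by the inner-region piece of the $L^2$ comparison (the paper's estimate \eqref{parallel3b}), the positivity of $\beta$ being what makes this one-sided bound usable. The only cosmetic difference is that you derive radial symmetry of the annulus ground state from simplicity plus rotation invariance, whereas the paper obtains it by separation into partial waves (Proposition~\ref{prop:annulus2}); both are valid.
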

The proof of Theorem~\ref{thm1b} relies on the same argument as in the proof of Theorem~\ref{thm1}. In addition, we make use of the fact that the potential energy term in the variational characterization of the lowest spectral points corresponding to the perturbation is not made larger by
the transplantation. This is no longer true for $\beta < 0$, of course, thus the positivity of $\beta$ is essential. Theorem~\ref{thm1} can be viewed as a particular case of Theorem~\ref{thm1b} upon including the case $\beta = 0$. For the reader's convenience we decided to state these results separately.
\begin{remark}
	As a consequence of Theorem~\ref{thm1b} we conclude that if the (negative) discrete spectrum of $\sfH_{\mu_\circ}+\beta\chi_\circ$ is nonempty, the same is true for $\Op+\beta\chi$.
\end{remark}

The second main result concerns minimization of the lowest eigenvalue of $\Op$ with respect to variation of
the transversal measure $\mu_\bot$, while preserving the shape of the curvilinear strip supporting the measure $\mu$.
\begin{thm}\label{thm2}
	Assume that Hypothesis~\ref{hyp:main} holds. Let the closed curves $\Sg_t$ be defined as in~\eqref{eq:Sgt}.
	Put $\aa = \mu_{\bot}(\cI)$ and suppose that the Schr\"odinger operators and $\sfH_{\aa\dl_{\Sg_t}}$, $t\in\cI$, are as in Definition~\ref{def:Op} with $\mu = \aa\dl_{\Sg_t}$. Then the lowest eigenvalues $\lm_1(\mu)$ and $\lm_1(\aa\dl_{\Sg_t})$ of $\Op$ and of $\sfH_{\aa\dl_{\Sg_t}}$, respectively, satisfy the inequality
	\begin{equation}\label{eq:ineq2}
		\lm_1(\mu) \ge \min_{t\in\cI}\lm_1(\aa\dl_{\Sg_t}).
	\end{equation}
\end{thm}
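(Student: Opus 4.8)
The plan is to bound $\lm_1(\mu)$ from below by inserting the ground-state transversal behaviour into the quadratic form and reducing the problem to a one-dimensional family of $\dl$-interaction operators. Let $u\in H^1(\dR^2)\sm\{0\}$ be arbitrary. Using the parallel coordinates $(s,t)$ inside the strip $\Pi_\cI(\Sg)$ and the Jacobian $J(s,t)=1+\kp(s)t>0$ from~\eqref{eq:Jac}, write the potential term as
\[
	\int_{\dR^2}|u|^2\dd\mu
	= \int_\cI \|u|_{\Sg_t}\|^2_{L^2(\Sg_t)}\,\dd\mu_\bot(t),
\]
exactly as in the proof that $\frh_\mu$ is closed. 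The key idea is then to bound $\|u|_{\Sg_t}\|^2_{L^2(\Sg_t)}$ uniformly in $t$: for every $t\in\cI$ one has, by the variational characterization~\eqref{eq:lm1} for $\mu=\aa\dl_{\Sg_t}$,
\[
	\aa\,\|u|_{\Sg_t}\|^2_{L^2(\Sg_t)}
	\le \|\nabla u\|^2_{L^2(\dR^2;\dC^2)} - \lm_1(\aa\dl_{\Sg_t})\,\|u\|^2_{L^2(\dR^2)}.
\]
Dividing by $\aa$, integrating against $\mu_\bot$ and using $\mu_\bot(\cI)=\aa$ gives
\[
	\int_{\dR^2}|u|^2\dd\mu
	\le \frac{\mu_\bot(\cI)}{\aa}\Big(\|\nabla u\|^2 - \big(\min_{t\in\cI}\lm_1(\aa\dl_{\Sg_t})\big)\|u\|^2\Big)
	= \|\nabla u\|^2 - \big(\min_{t\in\cI}\lm_1(\aa\dl_{\Sg_t})\big)\|u\|^2 .
\]
Rearranging, $\frh_\mu[u] = \|\nabla u\|^2 - \int|u|^2\dd\mu \ge \big(\min_{t\in\cI}\lm_1(\aa\dl_{\Sg_t})\big)\|u\|^2$ for all $u$, and taking the infimum over normalized $u$ in~\eqref{eq:lm1} yields~\eqref{eq:ineq2}.

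Two points need care. First, the pointwise-in-$t$ bound on $\|u|_{\Sg_t}\|^2_{L^2(\Sg_t)}$ requires that the trace of $u\in H^1(\dR^2)$ on each curve $\Sg_t$ be an admissible test function for the operator $\sfH_{\aa\dl_{\Sg_t}}$, i.e.\ that $u\in H^1(\dR^2)$ be in the form domain of $\sfH_{\aa\dl_{\Sg_t}}$, which is again $H^1(\dR^2)$; this is immediate since the form domain does not depend on the support curve. Second, one must know that $t\mapsto\lm_1(\aa\dl_{\Sg_t})$ attains its minimum on the compact interval $\cI$; this is the continuity (indeed, one expects it to be established in Appendix~\ref{app:trace}, where the family of auxiliary $\dl$-interaction operators is analyzed) of the map $t\mapsto\lm_1(\aa\dl_{\Sg_t})$, together with the fact that $\lm_1(\aa\dl_{\Sg_t})<0$ stays in the discrete spectrum below the essential spectrum $[0,\infty)$ for every $t\in\cI$.

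The main obstacle is the uniform control needed to make the curve $\Sg_t$ a legitimate object for the Birman--Schwinger/trace machinery for \emph{all} $t\in\cI$ simultaneously, and in particular the continuity of $t\mapsto\lm_1(\aa\dl_{\Sg_t})$ up to and including the endpoints $t=-d_-$ and $t=d_+$, where $\Sg_t$ is still $C^2$ (being a parallel curve of a $C^2$ curve at distance below $D_\pm$) but the geometry degenerates as one approaches $D_\pm$. I expect this to be handled by the special trace theorem in Appendix~\ref{app:trace}, which provides $t$-independent constants; granting that, the rest of the argument is the short form-inequality computation above. Finally, the remark on the annulus (Remark~\ref{rem:annulus}) would follow by combining this lower bound with the explicit radial analysis of Section~\ref{sec:annulus}, identifying the optimal $t_\star$ there.
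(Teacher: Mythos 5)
Your argument is correct, and it runs in the opposite direction from the paper's. The paper takes $u$ to be the normalized ground state of $\sfH_\mu$, uses the H\"older continuity of $t\mapsto\|u|_{\Sg_t}\|^2_{L^2(\Sg_t)}$ (Lemma~\ref{lem:trace_cont}) to pick a single $t_\star$ where this quantity is maximal, bounds $\int_{\dR^2}|u|^2\dd\mu\le\aa\|u|_{\Sg_{t_\star}}\|^2_{L^2(\Sg_{t_\star})}$, and then recognizes the resulting expression as $\frh_{\aa\dl_{\Sg_{t_\star}}}[u]\ge\lm_1(\aa\dl_{\Sg_{t_\star}})$. You instead work with an arbitrary $u\in H^1(\dR^2)$, invoke the form lower bound $\frh_{\aa\dl_{\Sg_t}}[u]\ge\lm_1(\aa\dl_{\Sg_t})\|u\|^2_{L^2(\dR^2)}$ for every $t$ separately, and integrate these pointwise-in-$t$ bounds against $\mu_\bot$; the normalization $\mu_\bot(\cI)=\aa$ makes the gradient terms match exactly. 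Your route buys you two things: you never need the ground state of $\sfH_\mu$, and you do not need the continuity in $t$ of the trace norm of a fixed function (Lemma~\ref{lem:trace_cont}) --- only the $\mu_\bot$-measurability of $t\mapsto\|u|_{\Sg_t}\|^2_{L^2(\Sg_t)}$ implicit in the disintegration $\int_{\dR^2}|u|^2\dd\mu=\int_\cI\|u|_{\Sg_t}\|^2_{L^2(\Sg_t)}\,\dd\mu_\bot(t)$, which is already used when proving that $\frh_\mu$ is closed. What you still share with the paper is the reliance on Lemma~\ref{lem:Op_cont}: without the continuity of $t\mapsto\lm_1(\aa\dl_{\Sg_t})$ on the compact interval $\cI$, your argument only yields $\lm_1(\mu)\ge\inf_{t\in\cI}\lm_1(\aa\dl_{\Sg_t})$, and the uniform lower semiboundedness plus continuity are exactly what upgrade the infimum to the minimum in~\eqref{eq:ineq2}. (Your aside about Birman--Schwinger machinery is not needed; everything happens at the level of quadratic forms, and the $t$-uniform trace estimate of Lemma~\ref{lem:trace} is what controls the behaviour up to the endpoints of $\cI$.)
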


\smallskip

\begin{remark}
	In view of Lemma~\ref{lem:Op_cont} the function
	$\cI\mapsto \lm_1(\aa\dl_{\Sg_t})$
	is continuous on the closed interval $\cI$, hence the minimum in~\eqref{eq:ineq2} exists being attained at some point $t_\star \in\cI$.
\end{remark}
The proof of Theorem~\ref{thm2} relies on the min-max principle again. As the test function for the operator $\sfH_{\aa\dl_{\Sg_t}}$ we now take the ground state of $\sfH_\mu$. The construction is inspired by the note~\cite{T84} on the optimization of the eigenvalue of the one-dimensional Schr\"odinger operator on an interval with respect to attractive potential with a prescribed $L^1$-norm.

\begin{remark}\label{rem:annulus}
	According to the numerical computation in~\cite{ET04} and the rigorous analysis in~\cite[Prop. 5]{OPP18},
	there is $R_\star = R_\star(\aa) > 0$ such that the function
	$f_\aa(R) := \lm_1(\aa\dl_\cC)$
	of the radius $R$ of the circle $\cC$ is decreasing on $[0,R_\star]$ and increasing on $[R_\star,\infty)$ with $\lim_{R\to\infty}\lm_1(\aa\dl_\cC) = -\frac14\aa^2$. Moreover, let $t_\star \in \cI$ be a value of $t$ at which the minimum on the right-hand side of~\eqref{eq:ineq2} is attained.
	In the circular case it is easy to see that
	\[
		t_\star =
		\begin{cases}
		-d_-,
			  &\quad \text{if}\;\; R_\star < R - d_-,\\
		R_\star - R,
			  &\quad \text{if}\;\; R_\star \in (R-d_-,R+d_+),\\
		d_+, &\quad \text{if}\;\; R_\star > R+d_+.
		\end{cases}
	\]
	This shows that the optimal transverse measure need not correspond to an endpoint of the interval $\cI$.	
\end{remark}

\section{Soft quantum annulus}\label{sec:annulus}

As the first ingredient of the proof we analyse the ground-state corresponding to the Schr\"odinger operator $\sfH_{\mu_\circ}$ as in Definition~\ref{def:Op} with the potential given by~\eqref{eq:potentials2}. Let $(r,\tt)$ be the polar coordinates with the pole at the center of the circle $\cC$.
\begin{prop}\label{prop:annulus}
	Assume that Hypothesis~\ref{hyp:main} holds.
	A normalized eigenfunction $u_1^\circ(r,\tt) = \psi_\circ(r)$ corresponding to the lowest eigenvalue $\lm_1(\mu_\circ)$ of $\sfH_{\mu_\circ}$ is independent of the angular variable and $\lm_1(\mu_\circ)$ admits the following variational characterization
	\begin{equation}\label{eq:lm1var}
	\lm_1(\mu_\circ) =
		\inf_{
		\stackrel[\psi\not=0]{}{\psi \in C^\infty_0([0,\infty))}}
	\frac{\displaystyle\int_0^\infty
		|\psi'(r)|^2 \,r\,\dd r -
		\int_{-d_-}^{d_+} |\psi(R+t)|^2\left(R+t\right)\,\dd\mu_\bot(t)}{\displaystyle
		\int_0^\infty |\psi(r)|^2\,r\,\dd r}.
	\end{equation}
\end{prop}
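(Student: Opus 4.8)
The plan is to exploit the rotational symmetry of $\mu_\circ$ to diagonalise $\sfH_{\mu_\circ}$ in the angular variable, to identify the ground state as living in the zeroth angular channel, and then to restrict the resulting one-dimensional variational problem to a smooth core.

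First I would observe that the measure $\mu_\circ$ defined in~\eqref{eq:potentials2} is invariant under rotations about the centre of $\cC$, so $\sfH_{\mu_\circ}$ commutes with the rotation group of $\dR^2$. Passing to polar coordinates $(r,\tt)$ and expanding $u=\sum_{m\in\dZ}u_m(r)e^{\mathrm{i}m\tt}$ turns $\frh_{\mu_\circ}$ into the orthogonal sum $\bigoplus_{m\in\dZ}\frh^{(m)}_{\mu_\circ}$ of lower-semibounded closed forms in $L^2((0,\infty);r\,\dd r)$, where — rewriting $\int_{\dR^2}|u|^2\,\dd\mu_\circ$ via the substitution $\dd s=R\,\dd\tt$ together with $1+t/R=(R+t)/R$ — one has
\[
\frh^{(m)}_{\mu_\circ}[\psi]=\int_0^\infty\Big(|\psi'(r)|^2+\frac{m^2}{r^2}|\psi(r)|^2\Big)r\,\dd r-\int_{-d_-}^{d_+}|\psi(R+t)|^2(R+t)\,\dd\mu_\bot(t).
\]
Denote by $\sfH^{(m)}_{\mu_\circ}$ the self-adjoint operator in $L^2((0,\infty);r\,\dd r)$ associated with $\frh^{(m)}_{\mu_\circ}$. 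For $m\neq 0$ the form domain of $\frh^{(m)}_{\mu_\circ}$ is contained in that of $\frh^{(0)}_{\mu_\circ}$ and $\frh^{(m)}_{\mu_\circ}[\psi]=\frh^{(0)}_{\mu_\circ}[\psi]+m^2\int_0^\infty\frac{|\psi(r)|^2}{r}\,\dd r\ge\frh^{(0)}_{\mu_\circ}[\psi]$, so the lowest spectral point of $\sfH^{(m)}_{\mu_\circ}$ is not below that of $\sfH^{(0)}_{\mu_\circ}$; hence $\lm_1(\mu_\circ)$ coincides with the bottom of the spectrum of $\sfH^{(0)}_{\mu_\circ}$. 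Being negative, it lies below the essential spectrum $[0,\infty)$ of that channel (\cf Proposition~\ref{prop:ess}), so it is a discrete eigenvalue with an eigenfunction $\psi_\circ\in L^2((0,\infty);r\,\dd r)$, and the normalised function $u_1^\circ(r,\tt):=\psi_\circ(r)$ is the sought eigenfunction of $\sfH_{\mu_\circ}$ independent of the angle.

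To obtain~\eqref{eq:lm1var} I would then note that, since the infimum in~\eqref{eq:lm1} is attained at the radial function $u_1^\circ$, it equals the infimum of $\frh_{\mu_\circ}[u]/\|u\|^2_{L^2(\dR^2)}$ taken over nonzero radial $u\in H^1(\dR^2)$. Radial functions in $C^\infty_0(\dR^2)$ are dense in this subspace (use radial mollifiers and radial cutoffs), and by the relative form bound~\eqref{eq:muineq} applied to $\mu_\circ$ the functional $u\mapsto\frh_{\mu_\circ}[u]$ is continuous in the $H^1$-norm, so the infimum does not change if one restricts to radial functions in $C^\infty_0(\dR^2)$. Every such function is of the form $u(x)=\psi(|x|)$ with $\psi\in C^\infty_0([0,\infty))$, whereas every $\psi\in C^\infty_0([0,\infty))$ conversely yields a radial function $u=\psi(|\cdot|)\in H^1(\dR^2)$; as this family lies between the two previous ones, the infimum over it is again $\lm_1(\mu_\circ)$. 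Expressing $\|u\|^2_{L^2(\dR^2)}$, $\|\nabla u\|^2_{L^2(\dR^2;\dC^2)}$ and $\int_{\dR^2}|u|^2\,\dd\mu_\circ$ in polar coordinates for $u=\psi(|\cdot|)$, the common factor $2\pi$ cancels and~\eqref{eq:lm1var} follows.

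The polar-coordinate bookkeeping and the density step are routine. The point that requires genuine care is the identification of the ground state with the zeroth angular channel: one must use the rotational invariance of $\mu_\circ$, handle correctly the centrifugal barrier in the form domains of the channels with $m\neq 0$, and verify that the channel bottom is actually attained so that it can be equated with $\lm_1(\mu_\circ)$. An alternative would be to argue that $\sfH_{\mu_\circ}$ is positivity improving, whence its ground state is simple and, being mapped to a ground state by every rotation, necessarily radial; the channel decomposition above is nevertheless preferable, being elementary and avoiding positivity-improving properties for measure-type potentials.
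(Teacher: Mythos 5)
Your argument is correct and follows essentially the same route as the paper: decomposition of $\frh_{\mu_\circ}$ into angular momentum channels, comparison of the fiber forms showing that the $m=0$ channel is the smallest (both by domain inclusion and by the centrifugal term), identification of the ground state as a radial eigenfunction of the zeroth channel, and reduction of the variational quotient to the core $C^\infty_0([0,\infty))$. Your density argument for the last step is somewhat more explicit than the paper's one-line appeal to the min-max principle on the core, but the substance is identical.
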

\begin{proof}
	The proof relies on the separation of variables in polar coordinates. The space $L^2(\dR^2)$ is isomorphic to	
	\[
		L^2_{\rm pol}(\dR^2) := L^2(\dR_+\times [0,2\pi);r\,\dd r\dd \tt)
	\]
	with the associated first-order Sobolev space
	\[
		H^1_{\rm pol}(\dR^2) := \{u\in L^2_{\rm pol}(\dR^2)\colon|\nabla_{\rm pol} u|\in L^2_{\rm pol}(\dR^2)\},
	\]
	where $\nabla_{\rm pol}$ stands for the gradient in polar coordinates. By change of variables one can rewrite the quadratic form $\frh_{\mu_\circ}$ as
	\[
		\frh_{\mu_\circ}^{\rm pol}[u] = \int_0^{2\pi}\int_0^\infty\left[|\p_r u|^2
		+ \frac{|\p_\tt u|^2}{r^2}\right]r\,\dd r\dd \tt - \int_0^{2\pi}\int_{-d_-}^{d_+}|u(R+t,\tt)|^2(R+t)\,\dd\mu_{\bot}(t)\dd\tt.
	\]
	with $\dom\frh^{\rm pol}_{\mu_\circ} = H^1_{\rm pol}(\dR^2)$.
	We employ the complete family of mutually orthogonal `partial wave' projections on $L^2(\dR^2)$,
	\begin{equation}\label{key}
		(\Pi_nu)(r,\tt) = \frac{e^{\ii n\tt}}{2\pi}\int_0^{2\pi}u(r,\tt') \,\mathrm{e}^{-\ii n\tt'}\dd \tt',\quad\; n\in\dZ.
	\end{equation}
	Identifying $\ran\Pi_n$ and $L^2(\dR_+;r\dd r)$, we arrive at the orthogonal decomposition of the Hilbert space $L^2_{\rm pol}(\dR^2)$,
	\[
		L^2_{\rm pol}(\dR^2) \simeq \bigoplus_{n\in\dZ}L^2(\dR_+;r\dd r)
	\]
	and the respective decomposition of the operator $\sfH_{\mu_\circ}$ as the orthogonal sum
	\[
		\sfH_{\mu_\circ} \simeq \bigoplus_{n\in\dZ} \sfH^{[n]}_{\mu_\circ},
	\]
	where the self-adjoint fiber operators $\sfH^{[n]}_{\mu_\circ}$, $n\in\dZ$, in the Hilbert space $L^2(\dR_+;r\dd r)$ are associated with the quadratic forms
	\[
	\begin{aligned}
	\frh^{[n]}_{\mu_\circ}[\psi]&\! :=\!
	\frh_{\mu_\circ}^{\rm pol}\left[
	\frac{\mathrm{e}^{\ii n\tt}\psi(r)}{\sqrt{2\pi}}\right]
	\!=\! \int_0^\infty\left[|\psi'|^2 \!+\!\frac{n^2|\psi|^2}{r^2}\right]r\,\dd r
	\!-\!\int_{-d_-}^{d_+} |\psi(R+t)|^2(R+t)\,\dd\mu_\bot(t),\\
	\dom\frh^{[n]}_{\mu_\circ} &\!:=\!
	\{\psi \in L^2(\dR_+;r\dd r)\colon 	
	\mathrm{e}^{\ii n\tt}\psi(r)\in H^1_{\rm pol}(\dR^2)\} \!=\!
	\Big\{\psi\colon \psi,\psi', \frac{n\psi}{r} \in L^2(\dR_+;r\,\dd r)\Big\}.
	\end{aligned}
	\]
	It is straightforward to see that the form $\frh^{[0]}_{\mu_\circ}$ is smaller in the sense of ordering of the forms than $\frh^{[n]}_{\mu_\circ}$ for any $n\ne 0$, because $\dom\frh^{[0]}_{\mu_\circ}\supset\dom\frh^{[n]}_{\mu_\circ}$ and for any non-trivial $\psi\in\dom\frh^{[n]}_{\mu_\circ}$ the inequality $\frh^{[0]}_{\mu_\circ}[\psi] < \frh^{[n]}_{\mu_\circ}[\psi]$ holds. Hence, an eigenfunction associated with the lowest eigenvalue of $\sfH_{\mu_\circ}$ corresponds to the fiber operator $\sfH_{\mu_\circ}^{[0]}$ and is thus a radial function.
	The variational characterization of $\lm_1(\mu_\circ)$ given in~\eqref{eq:lm1var}
	follows from the min-max principle applied to the quadratic form $\frh_{\mu_\circ}^{[0]}$ on its core $C^\infty_0([0,\infty))$.
\end{proof}
By making minor adjustments in the proof of Proposition~\ref{prop:annulus} we get the following generalization.

\begin{prop}\label{prop:annulus2}
	Adopt again Hypothesis~\ref{hyp:main}. Let the characteristic function $\chi_\circ$ be as in Subsection~\ref{ssec:strips}.
	Assume that $\beta > 0$ is such that
	the lowest spectral point $\lm_1^\beta(\mu_\circ) $
	of $\sfH_{\mu_\circ} +\beta\chi_\circ$ is a negative eigenvalue.
	A normalized eigenfunction $u_{1,
	\beta}^\circ(r,\tt) = \psi^{\beta}_\circ(r)$ corresponding to the lowest eigenvalue $\lm_1^\beta(\mu_\circ)$ of $\sfH_{\mu_\circ}+\beta\chi_\circ$ is independent of the angular variable and $\lm_1^\beta(\mu_\circ)$ admits the following variational characterization
	{\small
	\begin{equation*}
	\lm_1^\beta(\mu_\circ) \!=\!
	\inf_{
	\stackrel[\psi\not=0]{}{\psi \in C^\infty_0([0,\infty))}}
	\frac{\displaystyle\int_0^\infty
		|\psi'(r)|^2 \,r\,\dd r \!-\!
		\int_{-d_-}^{d_+} |\psi(R+t)|^2\left(R+t\right)\,\dd\mu_\bot(t)\!+\!
		\beta\int_0^{R-d_-}|\psi(r)|^2\,r\,\dd r}{\displaystyle
		\int_0^\infty |\psi(r)|^2\,r\,\dd r}.
	\end{equation*}}
\end{prop}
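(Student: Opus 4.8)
The plan is to repeat the separation-of-variables argument from the proof of Proposition~\ref{prop:annulus}, observing that the extra term $\beta\chi_\circ$ does not interfere with it. First I would note that, in polar coordinates $(r,\tt)$ with pole at the centre of $\cC$, the characteristic function of the disk $\Omg_\circ$ of radius $R-d_-$ is $\chi_\circ = \one_{(0,R-d_-)}(r)$, i.e.\ a bounded, nonnegative, radially symmetric multiplication operator. Consequently it is reduced by every partial-wave projection $\Pi_n$, and in the decomposition $L^2_{\rm pol}(\dR^2)\simeq\bigoplus_{n\in\dZ}L^2(\dR_+;r\dd r)$ the operator $\sfH_{\mu_\circ}+\beta\chi_\circ$ decomposes as $\bigoplus_{n\in\dZ}\big(\sfH^{[n]}_{\mu_\circ}+\beta M\big)$, where $M$ denotes multiplication by $\one_{(0,R-d_-)}$ in $L^2(\dR_+;r\dd r)$. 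Since $\beta M$ is bounded, the form domains $\dom\frh^{[n]}_{\mu_\circ}$ and their cores are unchanged; in particular $C^\infty_0([0,\infty))$ remains a form core for the $n=0$ fibre, and all form-theoretic facts (closedness, semiboundedness) carry over from the unperturbed case.

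Next I would establish that the lowest eigenvalue is again carried by the $n=0$ channel. The perturbing form $\psi\mapsto\beta\int_0^{R-d_-}|\psi|^2\,r\dd r$ is nonnegative and is added identically to every fibre form, so the ordering used for Proposition~\ref{prop:annulus} survives: for $n\neq 0$ one has $\dom(\frh^{[0]}_{\mu_\circ})\supset\dom(\frh^{[n]}_{\mu_\circ})$, and for any nontrivial $\psi$ in the smaller domain the inequality $\frh^{[0]}_{\mu_\circ}[\psi]+\beta\int_0^{R-d_-}|\psi|^2 r\dd r < \frh^{[n]}_{\mu_\circ}[\psi]+\beta\int_0^{R-d_-}|\psi|^2 r\dd r$ holds because the centrifugal contribution $\int_0^\infty \frac{n^2|\psi|^2}{r^2}\,r\dd r$ is strictly positive. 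By hypothesis $\lm_1^\beta(\mu_\circ)<0$ lies strictly below the essential spectrum $[0,\infty)$, hence it is a genuine eigenvalue, and by the fibre ordering it belongs to $\sfH^{[0]}_{\mu_\circ}+\beta M$; the associated eigenfunction is therefore of the form $u^\circ_{1,\beta}(r,\tt)=\psi^\beta_\circ(r)$, independent of $\tt$.

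Finally, the claimed variational formula is the Rayleigh-quotient form of the min-max principle for $\sfH^{[0]}_{\mu_\circ}+\beta M$ applied on the core $C^\infty_0([0,\infty))$, with numerator $\frh^{[0]}_{\mu_\circ}[\psi]+\beta\int_0^{R-d_-}|\psi|^2\,r\dd r$ and denominator $\int_0^\infty|\psi|^2\,r\dd r$.

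As for difficulties: there are essentially none beyond what was already handled for Proposition~\ref{prop:annulus}. The two places where the new hypothesis genuinely enters are that $\beta>0$ keeps the perturbation nonnegative, so the fibre ordering is preserved and the $n=0$ channel remains the lowest; and that $\lm_1^\beta(\mu_\circ)$ is assumed to be a negative eigenvalue, so that a ground state exists at all (which for $\beta$ large may fail, as noted after~\eqref{eq:lm1beta}). The boundedness of $\chi_\circ$ is what guarantees that all the form-domain and form-core considerations are identical to the unperturbed situation.
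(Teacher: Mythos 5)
Your proposal is correct and follows exactly the route the paper intends: the paper's own ``proof'' of Proposition~\ref{prop:annulus2} consists solely of the remark that it is obtained ``by making minor adjustments in the proof of Proposition~\ref{prop:annulus}'', and the adjustments you spell out (radial symmetry of $\chi_\circ$ so that it is reduced by every partial-wave projection, preservation of the fibre ordering since the same nonnegative term is added to each fibre form, and the hypothesis that $\lm_1^\beta(\mu_\circ)<0$ is an eigenvalue to guarantee a ground state exists) are precisely the ones needed. No gaps.
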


\section{The method of parallel coordinates}
\label{sec:parallel}

The second ingredient of the proof is the method of parallel coordinates. We outline it here referring to the papers~\cite{F41, H64, S01}, as well as to the monograph~\cite{Ba80} and references therein, for further details and proofs.

To begin with, we introduce the distance-functions on the domains $\Omg_\pm$ as
\begin{equation*}\label{key}
\rho_\pm\colon\Omg_\pm\arr\dR_+,\qquad
\rho_\pm(x) := {\rm dist}(x, \Sg).
\end{equation*}
According to, \eg, \cite[Sec. 3]{DZ94} the distance-functions $\rho_\pm$ are Lipschitz continuous with the Lipschitz constant $=1$, differentiable almost everywhere and
\begin{equation}\label{eq:nablarho}
	|\nb\rho_\pm(x)| = 1\qquad \text{for almost all}\,\, x\in\Omg_\pm.
\end{equation}
The set ${\rm Cut}\,(\Omg_\pm)\subset\Omg_\pm$ of zero Lebesgue measure, where the function $\rho_\pm$ is not differentiable, is called the \emph{cut-locus}.

Furthermore, we define the in-radii of $\Omg_\pm$ by
\begin{equation*}\label{eq:in_radius}
R_\pm := \sup_{x\in \Omg_\pm} \rho_\pm(x).
\end{equation*}
The in-radius of $\Omg_+$ is thus the radius of the largest disk in $\dR^2$ that can be inscribed into $\Omg_+$, and due to the well-known isoperimetric inequality, $|\Sg|^2 \ge 4\pi|\Omg_+|$, we get
\begin{align}\label{isop2}
R_+ \le R := \frac{L}{2\pi}.
\end{align}
On the other hand, we obviously  have $R_- = \infty$ in the complement to $\Omg_+$.

Finally, we introduce the following auxiliary functions,
\begin{equation}\label{eq:LmAm}
L_\pm \colon [0, R_\pm] \to \dR_+,
\qquad
L_\pm(t):=
\big|	\{x\in\ov{\Omg_\pm}\colon \rho_\pm(x)= t\}\big|.
\end{equation}
Clearly, $L_\pm(0)= L$ and $L_\pm(t)$ is the length of the corresponding level set of the function $\rho_\pm$.

Let us recall estimates of the functions in~\eqref{eq:LmAm} which will be useful in the sequel, cf. \cite[Prop. A.1]{S01},\cite[Sec. 4]{FK15}, \cite[Sec. 3.1]{KL20}:
\begin{prop}\label{prop_savo}	The functions $L_\pm$ defined by~\eqref{eq:LmAm} satisfy
\[
	L_+(t) \leq L - 2\pi t \and L_-(t) \leq L + 2\pi t.
\]	
\end{prop}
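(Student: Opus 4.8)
The plan is to prove the two inequalities by a standard first-variation/ODE argument for the length of parallel curves, treating the inner case $L_+$ and the outer case $L_-$ separately but symmetrically. First I would record the structural fact underlying both estimates: away from the cut-locus, the level set $\{x \in \ov{\Omg_\pm} : \rho_\pm(x) = t\}$ is a disjoint union of $C^1$ curves (finitely many, for $t$ small, but one should be careful that the number of components may only decrease as $t$ grows for the inner problem), and along each such curve the arc-length density evolves in $t$ according to a curvature-weighted factor. Concretely, parametrizing a component of $\Sg_{\pm t}$ by the foot-point on $\Sg$, the element of length is $(1 - t\,\kp_g)\,\dd s$ resp. $(1 + t\,\kp_g)\,\dd s$ for the inner resp.\ outer parallel set, where $\kp_g$ is the (signed) geodesic curvature of the relevant piece of $\Sg$; here I would invoke the parallel-coordinate machinery of Section~\ref{sec:parallel} and the Frenet formula~\eqref{eq:FS}, exactly as it was used to derive the Jacobian~\eqref{eq:Jac}.

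The key step is then the total-curvature (Gauss--Bonnet / Hopf Umlaufsatz) input: for the closed $C^2$ curve $\Sg$ bounding the simply connected domain $\Omg_+$, the integral of the signed curvature over one full turn equals $2\pi$, i.e. $\int_0^L \kp(s)\,\dd s = 2\pi$. Differentiating $L_+(t)$, one gets, for a.e.\ $t$ in the range where the parallel set stays regular, $L_+'(t) = -\int \kp_g\,\dd s \le -\,(\text{total curvature of the outer boundary component}) = -2\pi$, the last equality/inequality coming from Gauss--Bonnet applied to the subdomain of $\Omg_+$ at distance $> t$ from $\Sg$ (its boundary has total geodesic curvature $2\pi$ since that subdomain is, up to the cut-locus which contributes nonnegatively, simply connected). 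Integrating $L_+'(t) \le -2\pi$ from $0$ to $t$ and using $L_+(0) = L$ gives $L_+(t) \le L - 2\pi t$. For the outer function one argues with the complement: the set $\{x \in \Omg_- : \rho_-(x) > t\}$ has as its inner boundary component the curve $\Sg_t$, and applying Gauss--Bonnet to the bounded region between $\Sg$ and $\Sg_t$ (again handling the cut-locus, whose contribution now has the favourable sign) yields $L_-'(t) \le 2\pi$, hence $L_-(t) \le L + 2\pi t$. Equivalently, one can reference this directly from \cite[Prop. A.1]{S01} or \cite[Sec. 3.1]{KL20}, which is why I would phrase the proof as a short derivation rather than from scratch.

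The main obstacle, and the only genuinely delicate point, is the presence of the cut-locus: the naive pointwise ODE for the length density only holds on the complement of $\mathrm{Cut}(\Omg_\pm)$, and at the cut-locus the parallel set can lose length (pieces of the level set disappear or merge). The clean way to handle this is to observe that discarding the cut-locus only \emph{helps} both inequalities — for $L_+$ the true level length is bounded above by the length of the full image of $\Sg$ under the normal flow, which satisfies the clean ODE, and the jump contributions at the cut-locus are nonpositive; symmetrically for $L_-$ the cut-locus contributions are nonnegative but appear on the correct side. I would make this rigorous by working with the coarea formula (as announced in the introduction for the companion arguments) to write $\int_0^{R_\pm} L_\pm(t)\,\dd t$-type quantities and then localize, or, more economically, simply cite the monotonicity-of-parallel-length results in \cite{F41, H64, S01, Ba80}; the statement of Proposition~\ref{prop_savo} is precisely the standard consequence, so the write-up can be kept to a few lines with the Gauss--Bonnet normalization $\int_0^L \kp\,\dd s = 2\pi$ made explicit.
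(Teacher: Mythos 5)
The paper does not actually prove this proposition: it is explicitly \emph{recalled} from the literature, with pointers to \cite[Prop. A.1]{S01}, \cite[Sec. 4]{FK15} and \cite[Sec. 3.1]{KL20} --- which is precisely the fallback you offer, so at that level your treatment coincides with the paper's. Your accompanying first-variation sketch does assemble the correct ingredients (length density $1\mp t\kp$ of the parallel curves, the total curvature identity $\int_0^L\kp\,\dd s=2\pi$, the cut locus as the only delicate point), but two of its key steps are weaker than you present them. First, the claim that the level length is dominated by ``the length of the full image of $\Sg$ under the normal flow'' goes the wrong way: that image has length $\int_0^L|1-t\kp|\,\dd s\ge L-2\pi t$, and restricting instead to the set of non-cut parameters does not obviously help either, since on the discarded set the normal may have passed a focal time, where $1-t\kp<0$, so dropping it can \emph{increase} the integral. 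Second, the differential inequality $L_+'(t)\le-\int\kp_g\le-2\pi$ is not a direct consequence of Gauss--Bonnet: for the super-level set $\{\rho_+>t\}$ one has $\int\kp_g\,\dd s+\sum_i\theta_i=2\pi\times(\text{number of components})$ with \emph{positive} exterior angles $\theta_i$ at the cut-locus corners, which gives an upper bound on $\int\kp_g$, not the lower bound $\int\kp_g\ge2\pi$ you need. The compensation between these corner angles and the length lost at the cut locus is exactly the nontrivial content of the Fiala--Hartman lemma \cite{F41,H64}, so deferring to the cited results --- as both you and the paper ultimately do --- is the right call; just be aware that the sketch, taken on its own, does not yet close that loop.
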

Furthermore, given a real-valued $\psi\in C^\infty_0([0,\infty))$, we introduce the associated functions
$\psi_+ \in C^\infty(([0,R])$ and $\psi_-\in C^\infty_0([0,\infty))$ by
\begin{equation}\label{eq:psipm}
\begin{aligned}
	\psi_+(t) &:= \psi(R-t)\quad&\text{if}\;\;t& \in (0,R),\\ \psi_-(t)&  :=\psi(R+t)\quad&\text{if}\;\;t&\in (0,\infty).
\end{aligned}
\end{equation}
In the proof that will follow we employ test functions of the form
\begin{equation}\label{eq:upsi}
	u_\psi := (\psi_+\circ\rho_+)\oplus (\psi_-\circ\rho_-).
\end{equation}
The properties of $\psi_\pm$ defined in~\eqref{eq:psipm} and the Lipschitz continuity of $\rho_\pm$ imply that
$u_\psi$ is a Lipschitz continuous compactly supported function so that, in particular,  $u_\psi \in H^1(\dR^2\sm\Sg)$. Using $u_{\psi,\pm} := u_\psi|_{\Omg_\pm}$, we note that $u_{\psi,+}|_\Sg = u_{\psi,-}|_\Sg$, and consequently, $u_\psi\in H^1(\dR^2)$.

The co-area formula applied in two dimensions, see~\cite[Thm. 4.20]{B19} and~\cite{MSZ02}, to an open set $\cA\subset\dR^2$, a Lipschitz continuous function $f\colon\cA\arr\dR$, and an integrable function $g\colon\cA\arr\dR$ gives
\begin{equation}\label{eq:coarea}
	\int_\cA g(x)|\nb f(x)|\,\dd x =
		\int_\dR\int_{f^{-1}(t)} g(x)\,\dd \cH^1(x)\,\dd t,
\end{equation}
where $\cH^1$ in the inner integral on the right-hand side is the one-dimensional Hausdorff  measure on the level curve $\{x\in\cA\colon f(x) = t\}$.

In view of~\eqref{eq:nablarho}, we conclude that $|\nb u_{\psi,\pm}| = |\psi'_\pm\circ\rho_\pm|$ almost everywhere in $\Omg_\pm$. Hence, applying the formula~\eqref{eq:coarea} to $g = |\nabla u_{\psi,\pm}|^2$, $f = \rho_\pm$, and $\cA = \Omg_\pm$, using again~\eqref{eq:nablarho} and taking that $R \ge R_+$ into account, we get
\begin{equation}\label{parallel2}
\begin{aligned}
\|\nb u_\psi \|^2_{L^2(\dR^2;\dC^2)}
&\!=\!
\|\nb u_{\psi,+} \|^2_{L^2(\Omg_+;\dC^2)}
+
\|\nb u_{\psi,-} \|^2_{L^2(\Omg_-;\dC^2)}\\
& =
\int_{\Omg_+} |\nb u_{\psi,+}(x)|^2|\nb\rho_+(x)|\,\dd x + \int_{\Omg_-} |\nb u_{\psi,-}(x)|^2|\nb\rho_-(x)|\,\dd x\\
&\!=\!
\int_0^{R_+}|\psi'_+(t)|^2\int_{\rho_+^{-1}(t)} \dd \cH^1(x)  \,\dd t
\!+\!
\int_0^\infty|\psi_-'(t)|^2\int_{\rho_-^{-1}(t)}\dd \cH^1(x)  \,\dd t
\\
&\! =\!
\int_0^{R_+} | \psi_+'(t)|^2
L_+(t) \,\dd t
+
\int_0^\infty | \psi_-'(t)|^2
L_-(t) \,\dd t\\
& \!\le\!
\int_0^{R} | \psi_+'(t)|^2
(L\!-\!2\pi t) \,\dd t
\!+\!
\int_0^\infty | \psi_-'(t)|^2
(L\!+\!2\pi t) \,\dd t\\
&\!=\!
2\pi\int_0^\infty | \psi'(r)|^2
r \dd r,
\end{aligned}
\end{equation}
where Proposition~\ref{prop_savo} was used in the penultimate step. Following the same steps we also get
\begin{equation}\label{parallel3}
\begin{aligned}
\| u_\psi\|^2_{L^2(\dR^2)}
& =
\int_0^{R_+} | \psi_+(t)|^2
L_+(t) \,\dd t
+
\int_0^\infty  |\psi_-(t)|^2
L_-(t) \,\dd t
\\
& \le
\int_0^{R} | \psi_+(t)|^2
(L-2\pi t)\,\dd t
+
\int_0^\infty | \psi_-(t)|^2
(L + 2\pi t) \,\dd t\\
& =2\pi\int_0^\infty|\psi(r)|^2\,r\,\dd r,
\end{aligned}
\end{equation}
and for $\Omg\subset\dR^2$ defined in~\eqref{eq:Omg} we obtain
\begin{equation}\label{parallel3b}
\begin{aligned}
	\int_\Omg |u_\psi|^2\,\dd x
	& =
	\int_{d_-}^{R_+} | \psi_+(t)|^2
	L_+(t) \,\dd t\\
	&\le
	\int_{d_-}^{R} | \psi_+(t)|^2
	(L-2\pi t)\,\dd t =2\pi\int_0^{R-d_-}|\psi(r)|^2\,r\,\dd r.
\end{aligned}	
\end{equation}
Using finally the representation of the measure $\mu$ from~\eqref{eq:potentials} we get
\begin{equation}\label{parallel4}
\begin{aligned}
	\int_{\dR^2}|u_\psi|^2\,\dd\mu&  = \int_0^L\int_{-d_-}^{d_+}
	|u_\psi(\s(t) + t\nu(s))|^2(1+t\kp(s))\,\dd \mu_\bot(t) \dd s \\
	& =
	2\pi
	\int_{-d_-}^{d_+}
	|\psi(R+t)|^2(R+t)\,\dd \mu_\bot(t),
\end{aligned}
\end{equation}
where the total curvature identity $\int_0^{2\pi}\kp(s)\,\dd s = 2\pi$ was used.

\section{Proofs of the main results}\label{sec:proofs}
\subsection{Proofs of Theorems~\ref{thm1} and~\ref{thm1b}}

With all the preparations made above, the arguments are rather short. Recall that we associate with $\psi \in C^\infty_0([0,\infty))$ the function $u_\psi \in H^1(\dR^2)$ as in~\eqref{eq:upsi}. Applying the min-max characterization of $\lm_1(\mu)$, using~\eqref{parallel2},~\eqref{parallel3},~\eqref{parallel4}, and taking into account that $\lm_1(\mu_\circ) < 0$, we find
\[
\begin{aligned}
	\lm_1(\mu) &=
	\inf_{u\in H^1(\dR^2)\sm\{0\}}
	\frac{\frh_\mu[u]}{\|u\|^2_{L^2(\dR^2)}}	\le \inf_{\psi\in C^\infty_0([0,\infty))\sm\{0\}}
	\frac{\frh_\mu[u_\psi]}{\|u_\psi\|^2_{L^2(\dR^2)}}\\
	& \le
	\inf_{\psi\in C^\infty_0([0,\infty))\sm\{0\}}
	\frac{\displaystyle\int_0^\infty|\psi'(r)|^2\,r\,\dd r -
		\int_{-d_-}^{d_+} |\psi(R+t)|^2(R+t)\,\dd \mu_\bot(t)}{\displaystyle\int_0^\infty |\psi(r)|^2\,r\,\dd r} = \lm_1(\mu_\circ),
\end{aligned}	
\]
where the variational characterisation of the eigenvalue $\lm_1(\mu_\circ)$ from Proposition~\ref{prop:annulus} was used in the last step. In this way, Theorem~\ref{thm1} is proved.

Next we pass to Theorem~\ref{thm1b}, the proof of which relies on the same idea. The claim is trivial if $\lm_1^\beta(\mu_\circ) = 0$.
Without loss of generality we may thus assume that the constant $\beta > 0$ and the measure $\mu_\circ$ are such that $\lm_1^\beta(\mu_\circ) < 0$ is a negative eigenvalue of $\sfH_{\mu_\circ} +\beta\chi_{\circ}$. Applying the min-max characterization of $\lm_1^\beta(\mu)$, using~\eqref{parallel2}--\eqref{parallel4}, and taking into account that $\lm_1^\beta(\mu_\circ) < 0$, we find
{\small	
\[
\begin{aligned}
\lm_1^\beta(\mu)
	&=
	\inf_{u\in H^1(\dR^2)\sm\{0\}}
	\frac{\frh_\mu[u]+\beta{\displaystyle\int_\Omg}|u|^2\,\dd x}{\|u\|^2_{L^2(\dR^2)}}	
	\le
	\inf_{\psi\in C^\infty_0([0,\infty))\sm\{0\}}
	\frac{\frh_\mu[u_\psi]+\beta{\displaystyle\int_\Omg}|u_\psi|^2\,\dd x}{\|u_\psi\|^2_{L^2(\dR^2)}}\\
	& \le
	\inf_{
			\stackrel[\psi\not=0]{}{\psi \in C^\infty_0([0,\infty))}}
	\frac{\displaystyle\int_0^\infty|\psi'(r)|^2\,r\,\dd r -
	\int_{-d_-}^{d_+} |\psi(R+t)|^2(R+t)\,\dd \mu_\bot(t) +\beta\int_0^{R-d_-}|\psi(r)|^2r\dd r}{\displaystyle\int_0^\infty |\psi(r)|^2\,r\,\dd r}\\
	& = \lm_1^\beta(\mu_\circ),
\end{aligned}	
\]}
where now the variational characterization of the eigenvalue $\lm_1^\beta(\mu_\circ)$ from Proposition~\ref{prop:annulus2} was used in the last step. This yields the sought claim. \qed

\subsection{Proof of Theorem~\ref{thm2}}

Let $u \in H^1(\dR^2)$ be the normalized ground-state eigenfunction of $\sfH_\mu$. By Lemma~\ref{lem:trace_cont} below the function
$\cI\ni t\mapsto \|u|_{\Sg_t}\|^2_{L^2(\Sg_t)}$ is continuous, and therefore it attains its maximum value at some $t_\star = t_\star(\mu) \in \cI$. In this way, we get
\[
	\int_{\dR^2}|u|^2\dd\mu=
	\int_{-d_-}^{d_+} \|u|_{\Sg_t}\|^2_{L^2(\Sg_t)}\,\dd\mu_\bot(t)\\
	\le
	\|u|_{\Sg_{t_\star}}\|^2_{L^2(\Sg_{t_\star})}\int_{-d_-}^{d_+} \,\dd\mu_\bot(t)
	=
	\aa\|u|_{\Sg_{t_\star}}\|^2_{L^2(\Sg_{t_\star})}.
\]
This allows us to conclude that
\[
\begin{aligned}
	\lm_1(\mu) & = \frh_\mu[u] = \|\nabla u\|^2_{L^2(\dR^2;\dC^2)} - \int_{\dR^2}|u|^2\,\dd\mu\\
	& \ge
	\|\nabla u\|^2_{L^2(\dR^2;\dC^2)}
	-\aa\|u|_{\Sg_{t_\star}}\|^2_{L^2(\Sg_{t_\star})} = \frh_{\aa\dl_{\Sg_{t_\star}}}[u] 	
	\ge \lm_1(\aa\dl_{\Sg_{t_\star}}) \ge \min_{t\in\cI}\lm_1(\aa\dl_{\Sg_t}),
\end{aligned}
\]
where in the last step we used the continuity of $\cI\ni t\mapsto \lm_1(\aa\dl_{\Sg_t})$ proven in Lemma~\ref{lem:Op_cont} below. \qed

\begin{appendix}

\section{Traces on $\Sg_t$ and the lowest eigenvalue of $\sfH_{\aa\dl_{\Sg_t}}$}\label{app:trace}

The first auxiliary result concerns a $t$-independent upper bound on $\|u|_{\Sg_t}\|^2_{L^2(\Sg_t)}$ for an $H^1$-function $u$ in terms of the $L^2$-norms of $u$ itself and its gradient.
\begin{lem}\label{lem:trace}
	Let the curve $\Sg_t$ be as in~\eqref{eq:Sgt}.
	For any $\eps > 0$ there exists a
	constant $C(\eps) > 0$ such that for any $t\in\cI$ the following inequality
	\[
		\|u|_{\Sg_t}\|^2_{L^2(\Sg_t)} \le
		\eps\|\nabla u\|^2_{L^2(\dR^2;\dC^2)} + C(\eps)\|u\|^2_{L^2(\dR^2)}
	\]
	holds for all $u\in H^1(\dR^2)$. As a consequence, there is a constant $c > 0$ such that the inequality
	$\|u|_{\Sg_t}\|_{L^2(\Sg_t)}^2\le c\|u\|_{H^1(\dR^2)}^2$ holds for any $u\in H^1(\dR^2)$ and all $t\in\cI$.
\end{lem}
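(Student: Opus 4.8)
The plan is to reduce the estimate on the curve $\Sg_t$ to a one-dimensional trace inequality after a change of variables into parallel coordinates, and to make all constants manifestly independent of $t$. First I would recall that $\Sg_t$ is the image of $[0,L)$ under $s\mapsto\s(s)+t\nu(s)$, so that for $u\in H^1(\dR^2)$,
\[
  \|u|_{\Sg_t}\|_{L^2(\Sg_t)}^2=\int_0^L|u(\s(s)+t\nu(s))|^2\sqrt{1+t\kp(s)}\;\dd s ,
\]
since the arc-length element along $\Sg_t$ is $\sqrt{(1+t\kp(s))^2}\,\dd s=(1+t\kp(s))\,\dd s$ by the Frenet--Serret formula~\eqref{eq:FS}. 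Because $\kp$ is continuous on the compact curve and $d_\pm<D_\pm$, the Jacobian factor $1+t\kp(s)$ is bounded above and below by positive constants uniformly in $(s,t)\in[0,L)\times\cI$; call the upper bound $J_+$. Thus it suffices to bound $\int_0^L|u(\s(s)+t\nu(s))|^2\,\dd s$ uniformly in $t$.

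Next I would invoke a one-dimensional trace/interpolation inequality in the transversal direction. Fix a cutoff width $\delta\in(0,\min\{D_-,D_+\})$ with $\cI\subset(-\delta,\delta)$. For each fixed $s$, writing $v(r):=u(\s(s)+r\nu(s))$ as a function of $r$ on an interval containing $[-\delta,\delta]$, one has the elementary estimate: for any $\eps>0$ and any $t\in[-\delta,\delta]$,
\[
  |v(t)|^2\le \eps'\int_{-\delta}^{\delta}|v'(r)|^2\,\dd r+\frac{C''}{\eps'}\int_{-\delta}^{\delta}|v(r)|^2\,\dd r,
\]
with $C''$ depending only on $\delta$; this is just the Sobolev embedding $H^1(-\delta,\delta)\hookrightarrow C[-\delta,\delta]$ made quantitative (prove it by the fundamental theorem of calculus and Cauchy--Schwarz, then optimize the splitting). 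The crucial point is that the right-hand side does not see $t$ at all. Integrating this over $s\in[0,L)$, changing variables back via the parallel coordinates on the tubular neighbourhood $\Pi_{[-\delta,\delta]}(\Sg)$ (whose Jacobian is comparable to $1$, uniformly), and bounding the resulting integrals of $|u|^2$ and $|\nabla u|^2$ over the tube by the corresponding integrals over all of $\dR^2$, we obtain
\[
  \|u|_{\Sg_t}\|_{L^2(\Sg_t)}^2\le J_+\!\left(c_1\eps'\|\nabla u\|_{L^2(\dR^2;\dC^2)}^2+\frac{c_2}{\eps'}\|u\|_{L^2(\dR^2)}^2\right)
\]
for all $u\in C_0^\infty(\dR^2)$, with $c_1,c_2$ depending only on $\Sg$ and $\delta$. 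Choosing $\eps'=\eps/(J_+c_1)$ gives the first assertion on the dense set $C_0^\infty(\dR^2)$, and it extends to all of $H^1(\dR^2)$ by density together with the continuity of the trace map $H^1\to L^2(\Sg_t)$; the second assertion follows by taking $\eps=1$ and setting $c:=1+C(1)$.

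The only genuine subtlety—and hence the main obstacle—is verifying that every constant is independent of $t$; this is where one must be careful that the transversal estimate is applied on the \emph{fixed} reference interval $[-\delta,\delta]$ (not on a $t$-dependent one) and that the parallel-coordinate Jacobian $1+r\kp(s)$ on the fixed tube $\Pi_{[-\delta,\delta]}(\Sg)$ has two-sided bounds uniform in $(s,r)$. Once the geometry is set up on this fixed tubular neighbourhood, the estimate is a routine one-dimensional interpolation integrated in $s$; the remaining steps (change of variables, density, continuity of the trace) are standard.
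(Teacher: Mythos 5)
Your argument is correct, and it reaches the uniform-in-$t$ estimate by a route that differs from the paper's in one essential respect. The paper anchors the estimate at the reference curve $\Sg$: it invokes the classical trace theorem on $\Sg$ (cited from McLean) to handle $\|u|_{\Sg}\|^2_{L^2(\Sg)}$, and then controls the difference $\big||u(\s(s)+t\nu(s))|^2-|u(\s(s))|^2\big|$ by the fundamental theorem of calculus along the normal segment, using the diamagnetic inequality and Cauchy's inequality to absorb the result into $\eps\|\nabla u\|^2+C(\eps)\|u\|^2$. You instead dispense with any reference to a two-dimensional trace theorem: you apply the one-dimensional inequality $|v(t)|^2\le\eps'\int|v'|^2+C''(\eps')\int|v|^2$ on a \emph{fixed} transversal interval (uniform in the evaluation point $t$), integrate over $s$, and change variables back through the tube, using the two-sided bounds on the Jacobian $1+r\kp(s)$. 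This is more self-contained and makes the $t$-independence of the constants completely transparent; the paper's version buys brevity by reusing a standard cited result and is the form that is then recycled verbatim in the proof of Lemma~\ref{lem:trace_cont}. Two minor slips you should fix: (i) in your first display the arc-length element should read $(1+t\kp(s))\,\dd s$, not $\sqrt{1+t\kp(s)}\,\dd s$ (your subsequent sentence already gives the correct expression); (ii) a single $\delta$ with $\cI\subset(-\delta,\delta)$ and $\delta<\min\{D_-,D_+\}$ need not exist (e.g.\ if $d_+>D_-$), so you should work on an asymmetric interval $[-\delta_-,\delta_+]$ with $d_\pm\le\delta_\pm<D_\pm$, or simply on $\cI$ itself, since the one-dimensional inequality is valid up to the endpoints. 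Neither point affects the substance of the argument.
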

\begin{proof}
	In view of the density of $C^\infty_0(\dR^2)$ in $H^1(\dR^2)$ it suffices to check the inequality for $C^\infty_0$-functions.
	For any $u\in C^\infty_0(\dR^2)$,
	$s\in [0,L)$, and $t\in \cI$ we get
	using the fundamental theorem of calculus
	the following estimate
	\begin{equation}\label{eq:Dust}
	\begin{aligned}
	\cD_u(s,t) & := \left||u(\s(s)+t\nu(s))|^2 - |u(\s(s))|^2\right| \\
	&=
	\left|\int_0^1 \frac{\dd}{\dd r}
	(|u|^2(\s(s)+rt\nu(s))\,\dd r\right|\\
	&\le
	\int_0^1 |\langle\nabla (|u|^2)(\s(s)+rt\nu(s)),t\nu(s)\rangle|\,\dd r	\\
	& \le	
	2|t|
	\int_0^1 \big|(|u|\cdot\nabla |u|) (\s(s)+rt\nu(s))\big|\,\dd r.\\
	\end{aligned}
	\end{equation}		
	Let $\hat\eps > 0$ be arbitrary. 	
	Applying the inequality $\hat\eps a^2+\hat\eps^{-1}b^2 \ge
	2ab$ with $a,b > 0$, we can further estimate $\cD_u(s,t)$ as follows
	\begin{equation}\label{estimate0}
	\begin{aligned}
	\cD_u(s,t)&\le
	|t|
	\int_0^1
	\left(\hat\eps\big|\nabla |u| (\s(s)+rt\nu(s))\big|^2 + \hat\eps^{-1}
	\big|u (\s(s)+rt\nu(s))\big|^2\right) \dd r\\
	& \le
	\int_{-d_-}^{d_+}
	\left(\hat\eps\big|\nabla u (\s(s)+q\nu(s))\big|^2 + \hat\eps^{-1}
	\big|u (\s(s)+q\nu(s))\big|^2\right) \dd q,
	\end{aligned}
	\end{equation}
	where the substitution $q = rt$ was used, the interval of integration was enlarged and the
	diamagnetic inequality~\cite[Thm. 7.21]{LL} was applied.
	
	By the trace theorem~\cite[Thm. 3.38]{McL} (see also~\cite[Lem. 2.6]{BEL14}) for any $\tilde{\eps} > 0$ there exists $\tilde{C}(\tilde{\eps}) > 0$ such that
	\begin{equation}\label{estimate1}
		\|u|_\Sg\|^2_{L^2(\Sg)} =
		\int_0^L |u(\s(s))|^2\dd s \le \tilde{\eps}\|\nabla u\|^2_{L^2(\dR^2;\dC^2)} + \tilde{C}(\tilde{\eps})\|u\|^2_{L^2(\dR^2)}
	\end{equation}
	for any $u\in C^\infty_0(\dR^2)$.
	In view of~\eqref{eq:Jac} there exist constants $c_+ > c_- > 0$ such that
	\[
	1+\kp(s)t\in[c_-, c_+]\qquad\text{for all}\, s\in[0,L),\, t\in\cI.
	\]
	In this way we obtain the following simple estimate,
	\begin{equation}\label{estimate2}
	\|u|_{\Sg_t}\|^2_{L^2(\Sg_t)} =
	\int_0^L |u(\s(s)+t\nu(s))|^2(1+\kp(s)t)\,\dd s
	\le c_+
	\int_0^L |u(\s(s)+t\nu(s))|^2\,\dd s.
	\end{equation}
	Furthermore, combining estimates~\eqref{estimate0},~\eqref{estimate1}, and~\eqref{estimate2}, we end up with
	\[		
	\begin{aligned}
	\|u|_{\Sg_t}\|^2_{L^2(\Sg_t)} &\le
	c_+
	\int_0^L \left(|u(\s(s))|^2+ \cD_u(s,t)
	\right) \dd s\\
	&\le
	c_+\tilde\eps\|\nabla u\|^2_{L^2(\dR^2;\dC^2)} + c_+ \tilde{C}(\tilde\eps)\|u\|^2_{L^2(\dR^2)}\\
	&\qquad+ \frac{\hat\eps c_+}{c_-}\int_0^L\int_{-d_-}^{d_+}
	\big|\nabla u (\s(s)+q\nu(s))\big|^2(1+\kp(s)q)\,\dd q\dd s\\
	&\qquad\qquad +
	\frac{c_+}{\hat\eps c_-}
	\int_0^L\int_{-d_-}^{d_+}
	\big|u (\s(s)+q\nu(s))\big|^2(1+\kp(s)q)\,\dd q\dd s\\
	&\le
	\left(c_+\tilde\eps + \frac{c_+\hat\eps}{c_-}\right)
	\|\nabla u\|^2_{L^2(\dR^2;\dC^2)}
	+ \left(c_+ \tilde{C}(\tilde\eps) + \frac{c_+}{c_-\hat\eps}\right)\|u\|^2_{L^2(\dR^2)}.
	\end{aligned}	
	\]
	By choosing $\tilde\eps >0$ and $\hat\eps>0$ such that $\eps =c_+\tilde\eps + \frac{c_+\hat\eps}{c_-}$ we get the sought claim.
\end{proof}	
In the next lemma we prove that the function $\cI\ni t\mapsto \|u|_{\Sg_t}\|^2_{L^2(\Sg_t)}$ is H\"older continuous with exponent $\frac12$ for any $u\in H^1(\dR^2)$.
\begin{lem}\label{lem:trace_cont}
	There exists a constant $c >0$ such that
	\[
		\left|\|u|_{\Sg_{t_2}}\|_{L^2(\Sg_{t_2})}^2 - \|u|_{\Sg_{t_1}}\|^2_{L^2(\Sg_{t_1})}\right| \le c|t_1 - t_2|^{1/2}\|u\|^2_{H^1(\dR^2)}
	\]
	holds for all $u\in H^1(\dR^2)$ and for any $t_1,t_2\in \cI$.
\end{lem}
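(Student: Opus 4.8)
The plan is to reduce everything, as in Lemma~\ref{lem:trace}, to the case $u\in C^\infty_0(\dR^2)$ by density, keeping track of the fact that the bound will be expressed in terms of $\|u\|_{H^1(\dR^2)}$; the passage to the limit is then routine once the estimate is established for smooth compactly supported functions with a constant independent of $u$. So fix $u\in C^\infty_0(\dR^2)$ and $t_1,t_2\in\cI$, and write the difference of the two traces in parallel coordinates:
\[
	\|u|_{\Sg_{t_2}}\|^2_{L^2(\Sg_{t_2})} - \|u|_{\Sg_{t_1}}\|^2_{L^2(\Sg_{t_1})}
	= \int_0^L\Big(|u(\s(s)+t_2\nu(s))|^2(1+\kp(s)t_2) - |u(\s(s)+t_1\nu(s))|^2(1+\kp(s)t_1)\Big)\dd s.
\]
The integrand is split into two pieces: the variation of $|u|^2$ along the normal, namely $\big(|u(\s(s)+t_2\nu(s))|^2 - |u(\s(s)+t_1\nu(s))|^2\big)(1+\kp(s)t_2)$, and the variation of the Jacobian, namely $|u(\s(s)+t_1\nu(s))|^2\kp(s)(t_2-t_1)$. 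The Jacobian piece is the easy one: $\kp$ is bounded (as $\Sg$ is $C^2$), so it is bounded in absolute value by $C|t_2-t_1|\int_0^L|u(\s(s)+t_1\nu(s))|^2\dd s$, and the last integral is controlled by $c_-^{-1}\|u|_{\Sg_{t_1}}\|^2_{L^2(\Sg_{t_1})}\le C\|u\|^2_{H^1(\dR^2)}$ by Lemma~\ref{lem:trace} with, say, $\eps=1$; this even gives a Lipschitz, not merely H\"older, bound for this term.

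The main term is handled exactly by the computation already performed in the proof of Lemma~\ref{lem:trace}. By the fundamental theorem of calculus along the normal segment from $\s(s)+t_1\nu(s)$ to $\s(s)+t_2\nu(s)$, followed by the same Cauchy--Schwarz / Young step (with parameter $\hat\eps=1$, say) and the diamagnetic inequality, one gets
\[
	\big||u(\s(s)+t_2\nu(s))|^2 - |u(\s(s)+t_1\nu(s))|^2\big|
	\le |t_2-t_1|^{1/2}\Big(\int_{-d_-}^{d_+}\big|\nabla u(\s(s)+q\nu(s))\big|^2\dd q\Big)^{1/2}\Big(\int_{-d_-}^{d_+}\big|u(\s(s)+q\nu(s))\big|^2\dd q\Big)^{1/2},
\]
where the power $\tfrac12$ of $|t_2-t_1|$ is exactly what makes this estimate scale-correct: the length of the segment of integration in $r$ is $|t_2-t_1|$, and after the substitution $q=t_1+r(t_2-t_1)$ the Jacobian of that substitution contributes $|t_2-t_1|$, of which one factor is absorbed into the square and one is left over — actually it is cleanest to keep $\cD_u$ as in~\eqref{eq:Dust} but with the two endpoints $t_1,t_2$ in place of $0,t$, bound the inner $r$-integral over $[0,1]$ by Cauchy--Schwarz, and then change variables, so that the resulting $q$-integrals run over an interval of length $|t_2-t_1|\le d_-+d_+$, which we simply enlarge to $\cI$ at the cost of a harmless constant. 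Multiplying by $(1+\kp(s)t_2)\le c_+$ and integrating over $s\in[0,L)$, a final Cauchy--Schwarz in $s$ turns the product of the two $s$-dependent $L^2$-type quantities into $\|\nabla u\|_{L^2(\dR^2;\dC^2)}\cdot\|u\|_{L^2(\dR^2)}$ (after again dominating $1\le c_-^{-1}(1+\kp(s)q)$ to restore the parallel-coordinate volume element and recognize the integrals as genuine $\dR^2$-integrals over $\Pi_\cI(\Sg)$). Both factors are bounded by $\|u\|_{H^1(\dR^2)}$, giving the claimed bound $c|t_1-t_2|^{1/2}\|u\|^2_{H^1(\dR^2)}$.

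The only point that needs a little care — and the place I would call the main obstacle, though it is a mild one — is bookkeeping the substitution $q=rt_i$ versus $q=t_1+r(t_2-t_1)$ so that one genuinely extracts the factor $|t_1-t_2|^{1/2}$ and not $|t_1-t_2|$ or $\min(|t_1|,|t_2|)$; in particular one must integrate $\nabla u$ and $u$ along the segment joining the two \emph{parallel} points $\s(s)+t_1\nu(s)$ and $\s(s)+t_2\nu(s)$, which lies inside $\Pi_\cI(\Sg)$ since $\cI$ is an interval, so the parallel coordinates are valid throughout and~\eqref{eq:Jac} applies on the whole segment. Everything else is a repetition of the estimates in Lemma~\ref{lem:trace}.
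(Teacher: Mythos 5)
Your overall architecture coincides with the paper's: reduce to $C^\infty_0$ by density, split the difference into a Jacobian-variation term (Lipschitz in $t$, handled by Lemma~\ref{lem:trace}) and the variation of $|u|^2$ along the normal, and treat the latter by the fundamental theorem of calculus as in \eqref{eq:Dust}. The Jacobian piece is fine. But the step you yourself flag as the delicate one is where the argument breaks: the displayed pointwise (in $s$) inequality
\[
	\big||u(\s(s)+t_2\nu(s))|^2 - |u(\s(s)+t_1\nu(s))|^2\big|
	\le |t_2-t_1|^{1/2}\Bigl(\int_{-d_-}^{d_+}|\nabla u|^2\,\dd q\Bigr)^{1/2}\Bigl(\int_{-d_-}^{d_+}|u|^2\,\dd q\Bigr)^{1/2}
\]
does not follow from the computation you describe. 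The fundamental theorem of calculus plus Cauchy--Schwarz on the segment gives
$2\bigl(\int_{t_1}^{t_2}|\nabla u|^2\,\dd q\bigr)^{1/2}\bigl(\int_{t_1}^{t_2}|u|^2\,\dd q\bigr)^{1/2}$ with \emph{no} prefactor $|t_2-t_1|^{1/2}$: the Jacobian $|t_2-t_1|$ of the substitution $q=t_1+r(t_2-t_1)$ is exactly cancelled by the two factors $|t_2-t_1|^{-1/2}$ coming out of the square roots. The moment you then ``enlarge the interval of integration to $\cI$ at the cost of a harmless constant,'' you have discarded the only remaining trace of $|t_1-t_2|$, and the resulting bound is $O(1)$ in $|t_1-t_2|$, not $O(|t_1-t_2|^{1/2})$. (Using Young's inequality with $\hat\eps=1$ instead of Cauchy--Schwarz, as your ``Cauchy--Schwarz\,/\,Young'' phrasing suggests, is even worse: the gradient term $\int_{t_1}^{t_2}|\nabla u|^2\,\dd q$ then enters additively and carries no smallness at all after the $s$-integration.)

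The factor $|t_1-t_2|^{1/2}$ cannot be extracted pointwise in $s$; it appears only after integrating over $s$. The correct bookkeeping, which is what the paper does, is: keep the $q$-integrals over $[t_1,t_2]$, apply Cauchy--Schwarz jointly in $(s,q)$ to get
\[
	\int_0^L\!\!\int_{t_1}^{t_2}|u|\,|\nabla u|\,\dd q\,\dd s
	\le \Bigl(\int_0^L\!\!\int_{t_1}^{t_2}|\nabla u|^2\,\dd q\,\dd s\Bigr)^{1/2}
	\Bigl(\int_0^L\!\!\int_{t_1}^{t_2}|u|^2\,\dd q\,\dd s\Bigr)^{1/2},
\]
bound the first factor by $c\,\|\nabla u\|_{L^2(\dR^2;\dC^2)}$ (here you may harmlessly enlarge to $\cI$), but for the second factor write it as $\int_{t_1}^{t_2}\bigl(\int_0^L|u(\s(s)+q\nu(s))|^2\,\dd s\bigr)\dd q$ and apply Lemma~\ref{lem:trace} \emph{uniformly in $q$} to the inner integral; since the outer integral then runs over an interval of length $|t_1-t_2|$, this factor is at most $\bigl(c|t_1-t_2|\bigr)^{1/2}\|u\|_{H^1(\dR^2)}$. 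That is the sole source of the H\"older exponent $\tfrac12$. With this correction your argument becomes the paper's proof.
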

\begin{proof}
		Throughout the proof $c > 0$ denotes a generic positive constant, which
		varies from line to line.
		In view of the density of $C^\infty_0(\dR^2)$ in $H^1(\dR^2)$ it suffices to check the inequality for $C^\infty_0$-functions.
		Without loss of generality we may prove the claim only for the case that $t_1 = 0$ and that $t_2 = t \in (0,d_+)$.
		In this case we need to show that
		\[
			\cS_u(t) := \left|\|u|_{\Sg_{t}}\|_{L^2(\Sg_{t})}^2 - \|u|_{\Sg}\|^2_{L^2(\Sg)}\right| \le ct^{1/2}\|u\|^2_{H^1(\dR^2)}.
		\]
		By elementary means we obtain the bound
		\[	
		\begin{aligned}
			\cS_u(t) &= \left|
			\int_0^L|u(\s(s)+t\nu(s))|^2(1+\kp(s)t)\,\dd s
			-
			\int_0^L|u(\s(s))|^2\,\dd s
			\right|\\
			& \le ct\int_0^L|u(\s(s)+t\nu(s))|^2\,\dd s +
			\int_0^L\left||u(\s(s)) + t\nu(s))|^2 - |u(\s(s))|^2\right|\dd s
		\end{aligned}
		\]
		Using Lemma~\ref{lem:trace} and the estimate of~\eqref{eq:Dust}, and taking~\eqref{eq:Jac} into account we find
		\begin{equation}\label{eq:Sbnd}
	\begin{aligned}
			\cS_u(t)
			&\le
 			ct
			\|u|_{\Sg_t}\|^2_{L^2(\Sg_t)}
			+2\int_0^L\int_0^t|(|u|\cdot\nabla|u|)(\s(s)+q\nu(s))|\,\dd q\,\dd s\\
			& \le
			ct^{1/2}\|u\|^2_{H^1(\dR^2)} +
			2\int_0^L\int_0^t|(|u|\cdot\nabla|u|)(\s(s)+q\nu(s))|\,\dd q\,\dd s.
		\end{aligned}
		\end{equation}
		Furthermore, applying Cauchy-Schwarz inequality, diamagnetic inequality, and Lemma~\ref{lem:trace} again, we get
		\[
	\begin{aligned}
			&
			\int_0^L\int_0^t|(|u|\cdot\nabla|u|)(\s(s)+q\nu(s))|\,\dd q\,\dd s\\
			&\qquad \le
			\left(\int_0^L\int_0^t|\nabla u(\s(s)+q\nu(s))|^2\,\dd q\dd s\right)^{1/2}
			\left(\int_0^L\int_0^t|u(\s(s)+q\nu(s))|^2\,\dd q\dd s\right)^{1/2}\\
			&\qquad\le
			c\left(\int_0^L\int_0^t|\nabla u(\s(s)+q\nu(s))|^2(1+\kp(s)q)\,\dd q\dd s\right)^{1/2}
						\left(\int_0^t\|u|_{\Sg_q}\|_{L^2(\Sg_q)}^2\,\dd q\right)^{1/2}
		\\
		&\qquad \le
		ct^{1/2}\|\nabla u\|_{L^2(\dR^2;\dC^2)}\|u\|_{H^1(\dR^2)}
		\le ct^{1/2}\|u\|^2_{H^1(\dR^2)}.				
	\end{aligned}
		\]
	Combining the last estimate with~\eqref{eq:Sbnd} we arrive at the claim.	
\end{proof}
The purpose of the last lemma of the appendix is to establish the continuity of the lowest eigenvalue of $\sfH_{\aa\dl_{\Sg_t}}$ with respect to $t$.
\begin{lem}\label{lem:Op_cont}
	Let the curves $\Sg_t$ be as in~\eqref{eq:Sgt} and the number $\aa > 0$ be fixed.
	The operator-valued function
	\begin{equation}\label{eq:operator_cont}
		\cI\ni t\mapsto \sfH_{\aa\dl_{\Sg_t}}
	\end{equation}
	is continuous in the norm-resolvent topology and uniformly lower-semibounded, and as a consequence, the function $\cI\ni t\mapsto \lm_1(\aa\dl_{\Sg_t})$ is continuous.
\end{lem}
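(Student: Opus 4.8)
The plan is to establish the norm-resolvent continuity of the map \eqref{eq:operator_cont} by controlling the difference of the associated quadratic forms uniformly in a neighbourhood of each point $t_0\in\cI$, and then to invoke the standard fact that a form convergence of this type, combined with uniform lower-semiboundedness, implies norm-resolvent convergence; the continuity of $t\mapsto\lm_1(\aa\dl_{\Sg_t})$ then follows immediately because isolated eigenvalues below the essential spectrum depend continuously on the operator in the norm-resolvent topology, together with Proposition~\ref{prop:ess} which fixes $\sigma_{\rm ess}(\sfH_{\aa\dl_{\Sg_t}})=[0,\infty)$ independently of $t$. All the forms $\frh_{\aa\dl_{\Sg_t}}$ share the same domain $H^1(\dR^2)$, so the comparison is genuinely a comparison of the potential terms $\int_{\dR^2}|u|^2\,\dd(\aa\dl_{\Sg_t})=\aa\|u|_{\Sg_t}\|^2_{L^2(\Sg_t)}$.

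First I would record uniform lower-semiboundedness: by Lemma~\ref{lem:trace} the bound $\aa\|u|_{\Sg_t}\|^2_{L^2(\Sg_t)}\le \aa\eps\|\nabla u\|^2_{L^2(\dR^2;\dC^2)}+\aa C(\eps)\|u\|^2_{L^2(\dR^2)}$ holds with $C(\eps)$ \emph{independent of $t$}; choosing $\eps$ so that $\aa\eps<1$ yields $\frh_{\aa\dl_{\Sg_t}}[u]\ge -\aa C(\eps)\|u\|^2_{L^2(\dR^2)}$ for all $t\in\cI$, i.e. a lower bound uniform in $t$, and likewise a form bound $<1$ uniform in $t$. Second, I would quantify the form continuity: for $t_1,t_2\in\cI$ and $u\in H^1(\dR^2)$,
\[
	\big|\frh_{\aa\dl_{\Sg_{t_1}}}[u]-\frh_{\aa\dl_{\Sg_{t_2}}}[u]\big|
	=\aa\big|\,\|u|_{\Sg_{t_1}}\|^2_{L^2(\Sg_{t_1})}-\|u|_{\Sg_{t_2}}\|^2_{L^2(\Sg_{t_2})}\,\big|
	\le \aa c\,|t_1-t_2|^{1/2}\,\|u\|^2_{H^1(\dR^2)},
\]
which is exactly the Hölder estimate supplied by Lemma~\ref{lem:trace_cont}. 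Thus the forms are uniformly (in $u$, on the $H^1$-unit ball) $\tfrac12$-Hölder in $t$.

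Third, I would convert this form estimate into a resolvent estimate. Fix $\lambda<-\aa C(\eps)$ below the common lower bound, so that $A_t:=\sfH_{\aa\dl_{\Sg_t}}-\lambda$ is uniformly positive with $A_t\ge c_0>0$ for all $t\in\cI$. The second representation theorem gives $\dom(A_t^{1/2})=H^1(\dR^2)$ with $\|A_t^{1/2}u\|^2=\frh_{\aa\dl_{\Sg_t}}[u]-\lambda\|u\|^2_{L^2}$, and the uniform form bound makes the graph norm of $A_t^{1/2}$ equivalent to $\|\cdot\|_{H^1(\dR^2)}$ with constants independent of $t$. Writing the resolvent difference as
\[
	(\sfH_{\aa\dl_{\Sg_{t_1}}}-\lambda)^{-1}-(\sfH_{\aa\dl_{\Sg_{t_2}}}-\lambda)^{-1}
	=A_{t_1}^{-1/2}\big(A_{t_2}^{1/2}A_{t_1}^{-1/2}\big)^{*}
	\big(A_{t_2}^{-1/2}(\frh_{t_2}-\frh_{t_1})A_{t_1}^{-1/2}\big)A_{t_2}^{-1/2}
\]
(interpreting the middle factor via the bounded sesquilinear form $(\frh_{t_2}-\frh_{t_1})(\cdot,\cdot)$, whose operator norm relative to $H^1\times H^1$ is $O(|t_1-t_2|^{1/2})$ by the displayed estimate) and using $\|A_{t}^{-1/2}\|\le c_0^{-1/2}$ together with the $t$-uniform equivalence $\|A_t^{1/2}u\|\asymp\|u\|_{H^1}$, one obtains
\[
	\big\|(\sfH_{\aa\dl_{\Sg_{t_1}}}-\lambda)^{-1}-(\sfH_{\aa\dl_{\Sg_{t_2}}}-\lambda)^{-1}\big\|
	\le C\,|t_1-t_2|^{1/2},
\]
which is norm-resolvent continuity (indeed Hölder) on $\cI$. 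Finally, since $\sigma_{\rm ess}=[0,\infty)$ for every $t$ and $\lm_1(\aa\dl_{\Sg_t})<0$ is a simple eigenvalue, the continuity of $t\mapsto\lm_1(\aa\dl_{\Sg_t})$ follows from the upper semicontinuity of discrete spectra under norm-resolvent convergence (equivalently, from the min-max principle applied to the forms together with the displayed form estimate, which already gives $|\lm_1(\aa\dl_{\Sg_{t_1}})-\lm_1(\aa\dl_{\Sg_{t_2}})|\le C|t_1-t_2|^{1/2}$ directly once one knows the minimizing sequences stay $H^1$-bounded, which is guaranteed by the uniform form bound $<1$).

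\textbf{Main obstacle.} The only genuinely delicate point is the passage from the \emph{form} estimate to the \emph{norm-resolvent} estimate: one must be careful that the bounded operator representing $\frh_{t_2}-\frh_{t_1}$ on $H^1\times H^1$ is sandwiched between resolvent square roots in a way whose bound is uniform in $t$, and this is exactly where the $t$-independence of the constant in Lemma~\ref{lem:trace} (emphasized already in the proof of the Proposition on closedness of $\frh_\mu$) is essential — without it the equivalence $\|A_t^{1/2}\cdot\|\asymp\|\cdot\|_{H^1}$ could degenerate as $t$ varies. Everything else is routine: uniform lower-semiboundedness and the form continuity are immediate consequences of Lemmas~\ref{lem:trace} and~\ref{lem:trace_cont}, respectively, and the continuity of the lowest eigenvalue is then standard perturbation theory.
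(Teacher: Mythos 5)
Your proposal is correct and follows essentially the same route as the paper: uniform lower-semiboundedness from the $t$-independent constant in Lemma~\ref{lem:trace}, the H\"older form estimate from Lemma~\ref{lem:trace_cont}, and a sandwiching of the form difference between resolvents (the paper writes $\|\sfR_t-\sfR_0\|$ directly as a supremum of $\big|\frh_{\aa\dl_{\Sg}}[\sfR_t u,\sfR_0 v]-\frh_{\aa\dl_{\Sg_t}}[\sfR_t u,\sfR_0 v]\big|$ rather than using square roots, but this is the same estimate). The concluding step on eigenvalue continuity likewise matches the paper's appeal to spectral convergence under norm-resolvent convergence.
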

\begin{proof}
	Throughout the proof $c > 0$ again denotes a generic positive constant, which
	varies from line to line.
	Lemma~\ref{lem:trace} in combination with the expression
	for the form~\eqref{eq:form} referring to $\mu = \aa\dl_{\Sg_t}$ shows that operators $\sfH_{\aa\dl_{\Sg_t}}$, $t\in\cI$, are uniformly bounded from below by some constant $\lm_1< 0$.
	Without loss of generality it suffices to prove that $\sfH_{\aa\dl_{\Sg_t}}$ converges in the norm resolvent sense to $\sfH_{\aa\dl_\Sg}$ as $t\arr 0$. The norm resolvent continuity of $\cI\ni t\mapsto \sfH_{\aa\dl_{\Sg_t}}$  at the other points of the interval $\cI$ can be proven analogously.
	
	We fix $\lm_0 < \lm_1$, use the notation $\sfR_t := (\sfH_{\aa\dl_{\Sg_t}}-\lm_0)^{-1}$, $t\in\cI$ and claim
	that there is a constant $c > 0$ such that
	\begin{equation}\label{eq:resolvent_estimate}
		\|\sfR_t - \sfR_0\| \le c|t|^{1/2}.
	\end{equation}
	In fact, first we note that
	\[
	\begin{aligned}
		\|\sfR_t - \sfR_0\|
		& =
		\sup_{\|u\|, \|v\| = 1}
		\big|((\sfR_t - \sfR_0)u,v)_{L^2(\dR^2)}\big|\\
		& =
		\sup_{\|u\|, \|v\| = 1}
		\big|(\sfR_t u,(\sfH_{\aa\dl_\Sg} - \lm_0) \sfR_0v)_{L^2(\dR^2)}-
		((\sfH_{\aa\dl_{\Sg_t}} - \lm_0)\sfR_tu, \sfR_0v)_{L^2(\dR^2)}
		\big|\\
		& =
		\sup_{\|u\|, \|v\| = 1}
		\big|
		\frh_{\aa\dl_{\Sg}}[\sfR_t u,\sfR_0v]-
		\frh_{\aa\dl_{\Sg_t}}[\sfR_t u,\sfR_0v]
		\big|.
	\end{aligned}
	\]
	The estimate~\eqref{eq:resolvent_estimate} would follow if we prove that
	\begin{equation}\label{eq:fgineq}
		\big|
		\frh_{\aa\dl_{\Sg}}[f,g]-
		\frh_{\aa\dl_{\Sg_t}}[f,g]
		\big|\le c|t|^{1/2}\left(\|f\|^2_{H^1(\dR^2)}
		+\|g\|^2_{H^1(\dR^2)} \right),\quad f,g\in H^1(\dR^2),
	\end{equation}
	since with the choice $f = \sfR_t u$ and $g = \sfR_0v$ the inequality~\eqref{eq:fgineq} together with Lemma~\ref{lem:trace} yields
	the existence of constants $c_1 >0$ and $c_2 >-c_1\lm_0$ such that
	\[
	\begin{aligned}
		&\big|
		\frh_{\aa\dl_{\Sg}}[\sfR_t u,\sfR_0v]-
		\frh_{\aa\dl_{\Sg_t}}[\sfR_t u,\sfR_0v]
		\big|\\		
		&\qquad \le c|t|^{1/2}
		\left(
		c_1\frh_{\aa\dl_{\Sg}}[\sfR_0 v] + c_2\|\sfR_0 v\|^2_{L^2(\dR^2)} + c_1\frh_{\aa\dl_{\Sg_t}}[\sfR_t u] + c_2\|\sfR_t u\|^2_{L^2(\dR^2)}
		\right)\\
		&\qquad= c|t|^{1/2}\left[
		c_1(\sfR_0v,v)_{L^2(\dR^2)} + c_1(\sfR_tu,u)_{L^2(\dR^2)} +
		(c_1\lm_0 + c_2)\left(\|\sfR_0v\|^2_{L^2(\dR^2)}
		+\|\sfR_tu\|^2_{L^2(\dR^2)}\right)
		\right]\\
		&\qquad \le
		c|t|^{1/2}\left(\|u\|^2_{L^2(\dR^2)} + \|v\|^2_{L^2(\dR^2)}\right),
	\end{aligned}
	\]
	where we used that $\|\sfR_t\|\le \frac{1}{\lm_1-\lm_0}$ holds for all $t\in\cI$.
	In view of the polarization
	identity it is enough to check~\eqref{eq:fgineq} for $f = g$. By definition of the form $\frh_{\aa\dl_{\Sg_t}}$ in~\eqref{eq:form} with $\mu =\aa\dl_{\Sg_t}$ we get
	\[
	\begin{aligned}
		\big|\frh_{\aa\dl_{\Sg_t}}[f] - \frh_{\aa\dl_\Sg}[f]\big| =
		\aa\left|\|f|_{\Sg_t}\|^2_{L^2(\Sg_t)}-
		\|f|_{\Sg}\|^2_{L^2(\Sg)}\right| \le c|t|^{1/2}\|f\|_{H^1(\dR^2)}^2,
	\end{aligned}
	\]
	where Lemma~\ref{lem:trace_cont} was applied in the last step. The continuity of the lowest eigenvalue
	$\cI\ni t\mapsto \lm_1(\aa\dl_{\Sg_t})$ follows from the norm resolvent continuity of the operator-valued function in~\eqref{eq:operator_cont} in combination with the spectral convergence result from~\cite[Satz 9.24]{W}.
\end{proof}
\end{appendix}

\section*{Acknowledgement}
The work of P.E. was supported by the EU project CZ.02.1.01/0.0/0.0/16\textunderscore 019/0000778.

\newcommand{\etalchar}[1]{$^{#1}$}

\end{document}